\newtheorem{theorem}{Theorem}
\newtheorem{lemma}{Lemma}
\newtheorem{remark}{Remark}
\newtheorem{proof}{Proof}
\begin{document}
\setlength{\textfloatsep}{0.11cm}
\setlength{\baselineskip}{0.42cm}
\setlength{\abovecaptionskip}{1mm}

\title{{Capacity and IAPR Analysis for MIMO Faster-than-Nyquist Signaling with High Acceleration Rate}}
\author{Zichao~Zhang,~\IEEEmembership{Student Member,~IEEE,}
		Melda~Yuksel,~\IEEEmembership{Senior Member,~IEEE,}
  Gokhan M. Guvensen,
Halim~Yanikomeroglu,~\IEEEmembership{Fellow,~IEEE}
\thanks{This work was funded in part by the Scientific and Technological Research Council of Turkey, TUBITAK, under grant 122E248, and in part by a Discovery Grant awarded by the Natural Sciences and Engineering Research Council of Canada (NSERC).}
\thanks{Z. Zhang and H. Yanikomeroglu are with the Department of Systems and Computer Engineering at Carleton University, Ottawa, ON, K1S 5B6, Canada e-mail:	zichaozhang@cmail.carleton.ca, halim@sce.carleton.ca.}
\thanks{M. Yuksel and G. Guvensen are with the Department of Electrical and Electronics Engineering, Middle East Technical University, Ankara, 06800, Turkey, e-mail: {ymelda, guvensen}@metu.edu.tr.}
}

\maketitle

\begin{abstract}
 Faster-than-Nyquist (FTN) signaling is a non-orthogonal transmission technique offering a promising solution for future generations of communications.  
{This paper
studies the capacity of FTN signaling in multiple-input multiple-output
(MIMO) channels for high acceleration factors. In our previous study \cite{zhang2022faster}, we found the capacity for MIMO FTN channels if the acceleration factor is larger than a certain threshold, which depends on the bandwidth of the pulse shape used.   In this paper we extend the capacity analysis to acceleration factors smaller than this mentioned threshold.} In addition to capacity, we conduct {instantaneous-to-average power ratio (IAPR)} analysis and simulation for MIMO FTN for varying acceleration factors for both Gaussian and QPSK symbol sets. Our analysis reveals important insights about transmission power and received signal-to-noise ratio (SNR) variation in FTN. As the acceleration factor approaches 0, if the transmission power is fixed, the received SNR diminishes, or if the received SNR is fixed, {IAPR} at the transmitter explodes. 
\end{abstract}

\begin{IEEEkeywords}
Channel capacity, faster-than-Nyquist, multiple-input multiple-output, {instantaneous} to average power ratio.
\end{IEEEkeywords}

\section{Introduction}

Faster-than-Nyquist (FTN) transmission emerges as a groundbreaking technology in modern communication systems, poised to redefine data transmission capabilities \cite{5gand6g}. Departing from the traditional Nyquist criterion, which sets limits on the maximum data rate achievable without inter-symbol interference (ISI), FTN challenges conventional wisdom by introducing intentional ISI. By cleverly exploiting this intentional ISI, FTN enables communication rates that are unattainable by Nyquist transmission. 

In classical communication theory, consecutive symbols are transmitted in an orthogonal fashion and do not result in ISI. The Nyquist limit refers to the maximum signaling rate beyond which orthogonality between consecutive symbols can no longer be maintained. In contrast, in \cite{Mazo}, Mazo demonstrated that FTN signals can be transmitted at a rate of $1/\delta$ without an impact on the minimum distance of binary sinc pulses if the acceleration factor $\delta \in [0.802,1]$. In other words, approximately 25\% more bits than Nyquist signaling can be sent within the same bandwidth and at the same signal-to-noise ratio (SNR) without compromising the bit error rate (BER), given ideal processing at the receiver. We define the acceleration rate as $\frac{1}{\delta T}$, which quantifies the relative increase in transmission rate of FTN signaling compared to the conventional Nyquist signaling. It characterizes the ratio between the FTN symbol rate and the Nyquist symbol rate, indicating how much faster FTN transmits symbols.

Numerous studies have explored the capacity of single-input single-output (SISO) FTN systems. In \cite{rusek} and \cite{property}, the authors propose an achievable rate assuming independent transmitted symbols. However, optimal input distributions may introduce correlation to mitigate ISI, as investigated in \cite{bajcsy} and \cite{rusek12}, which propose a waterfilling-based power allocation strategy. These works, however, do not fully address ISI’s impact under transmit power constraints, differing from orthogonal signaling. Transmit precoding offers another approach, as shown in \cite{linearprecoding}, with eigen-decomposition-based precoding schemes proposed in \cite{svd}, \cite{eigendecomposition}, and \cite{chaki}. In contrast, \cite{ganji} provides the complete SISO FTN capacity expression, accounting for ISI in the power constraint and covering all acceleration factors $\delta \in (0,1]$. Similarly, \cite{asympftn} shows that binary FTN signaling is asymptotically optimal as $\delta$ tends to 0.


Using multiple antennas at both the transmitter and receiver significantly enhances channel capacity \cite{telatar}. Since \cite{telatar}, the field has expanded to massive multiple-input multiple-output (MIMO) \cite{lu2014overview} and cell-free massive MIMO \cite{elhoushy2021cell}. Integrating FTN signaling into MIMO systems is a natural next step. Rusek \cite{rusek2009existence} explored the Mazo limit in MIMO, while Modenini et al. \cite{andrea} demonstrated FTN performance gains in large-scale antenna systems with simplified receivers. Yuhas et al. \cite{michael} studied MIMO FTN capacity with independent inputs, but this assumption is limited, as optimal input distribution may not be independent. The ergodic capacity of MIMO FTN over triply-selective Rayleigh fading channels was analyzed in \cite{MIMOFTNfad} without transmitter channel state information. In \cite{zhang2022faster}, we examined FTN capacity in MIMO systems for both frequency-flat and frequency-selective channels, but our analysis did not address small acceleration factors. Specifically, the capacity for $\delta < 1/(1+\beta)$, when using root raised cosine (RRC) pulses with roll-off factor $\beta \in [0,1]$, remains unknown. This work addresses that gap by establishing FTN capacity for MIMO systems at small $\delta$ values, providing new insights into transmit power, {instantaneous-to-average power ratio (IAPR)}, and received SNR in FTN systems.

{In our previous work \cite{zhang2022faster}, we obtained the capacity of MIMO FTN for $\delta\geq\frac{1}{1+\beta}$, where it can be achieved with spatial-domain water-filling and frequency-domain spectrum shaping. In this work, we have extended the analysis to all $\delta$ in $[0,1]$. We compare this paper with \cite{zhang2022faster} and the literature in Table \ref{tab:contribution}.}
\begin{table}[t]
    \centering 
    \renewcommand{\arraystretch}{1.2} 
    \begin{tabular}{|>{}c|>{}c|>{}c|>{}c|}
    \hline
     & This paper & \cite{zhang2022faster} & \cite{ganji} \\
     \hline 
    $\delta\geq\frac{1}{1+\beta}$ & \ding{51} & \ding{51} & \ding{51} \\
    \hline 
    $\delta<\frac{1}{1+\beta}$ & \ding{51} & \ding{55} & \ding{51} \\
    \hline 
    MIMO channel & \ding{51} & \ding{51} & \ding{55} \\
    \hline
\end{tabular}
\vspace{3mm}
\caption{Contribution of this paper compared to other works.}
    \label{tab:contribution}
\end{table}

In FTN signaling, small $\delta$ leads to significant pulse overlap, often resulting in high transmission power peaks. Thus, examining {IAPR} performance in FTN signaling is crucial. Several studies have investigated {IAPR} in FTN systems. For instance, \cite{petitpied2018} compared FTN and Nyquist signaling in terms of spectral efficiency, bandwidth, SNR, and {IAPR}, showing FTN’s optimality over a range of spectral efficiencies. The paper \cite{liu2018peak} demonstrated that multi-carrier FTN signaling could achieve lower {IAPR} than Nyquist systems. However, these studies focused on specific roll-off and acceleration factors and did not provide a universal analysis of FTN {IAPR} performance compared to Nyquist signaling.


FTN has also been proposed as an effective solution to {IAPR} issues in satellite communications. In \cite{lucciardi2016}, FTN was demonstrated to enhance spectral efficiency under non-linear amplification. The paper \cite{delamotte2017faster} showed that FTN could achieve lower {IAPR} than Nyquist transmission when symbol constellations and rates are appropriately chosen for small roll-off factors. Similarly, \cite{liq2020} examined FTN {IAPR} performance in DVB-S2X systems using a high power amplifier model.



While previous studies have made practical assumptions and examined {IAPR} behavior in FTN signaling, this paper aims to analyze {IAPR} across different acceleration factors both for fixed average transmission power and for received SNR. This analysis will offer general design guidelines for practical FTN transmission. Additionally, there is no existing {IAPR} performance analysis for FTN under {extremely small acceleration factors} in the literature. We will demonstrate that there is a trade-off between {IAPR} behavior and received SNR. If received SNR is fixed, this can imply an exploding {IAPR} behavior. Consequently, BER and/or spectral efficiency calculations should consider {IAPR} effects.


	 
The organization of the paper is as follows. For MIMO FTN, we define the system model in Section~\ref{sec:systemmodel}. We develop the mutual information expression in Section~\ref{sec:capacityderivation}. In Section~\ref{sec:paprsec}, we derive the theoretical expression for the {IAPR} distribution for FTN transmission with Gaussian signaling as well as QPSK symbols. In Section~\ref{sec:simresult}, we present our numerical results. Finally, in Section~\ref{sec:conclusion} we conclude the paper.
	
We use the following notations in the paper. The superscript $*$ means complex conjugate
 and $\star$ means convolution.
The superscript $T$ is transpose, the operation $\otimes$ is the Kronecker product. 
The operation $\mathbb{E}$ is expectation and the superscript $\dagger$ is the Hermitian conjugate. The indicator function is denoted as $\mathbbm{1}$. An identity matrix of size $L\times L$ is shown as $\bm{I}_L$, and the trace operation is $\text{tr}(\cdot)$. Finally, $(a)^+$ means $\max(0,a)$.

\vspace{-0.35cm}
\section{System Model}\label{sec:systemmodel}

In this paper, we assume a communication scenario where the transmitter is equipped with $K$ antennas and the receiver is equipped with $L$ antennas. Assuming that $N$ symbols {are} transmitted from each transmit antenna, we use $a_k[n], n=0, \dots, N-1, k=1, \dots, K$, to denote the $n$th transmitted  symbol from the $k$th transmit antenna. The symbols of each antenna go through the pulse shaping filter, which we denote as $p(t)$, and we let all transmit antennas have the same pulse shaping filter. {For FTN transmission, the symbols are transmitted every $\delta T$ seconds, where $T$ is the sampling period in which there will be no ISI at sampling instants.} 
Therefore, we write the expression of the transmitted signal $x_k(t)$ from the $k$th antenna as 
\begin{equation}
    x_k(t)=\sum_{m=0}^{N-1}a_k[m]p(t-m\delta T). \label{eqn:xt}
\end{equation}
After the signal is sent to the wireless channel, each transmission link experiences fading. {We assume in the paper that the communication suffers from frequency-flat fading} and denote the channel coefficient for the link from the $k$th transmit antenna to the $l$th receive antenna, $l=1,\dots, L$, as {$h_{lk}\in\mathbb{C}$}.
 On the receiver side, {the signal at the $l$th receiver antenna including the circularly symmetric complex Gaussian noise, which we denote as $\xi_l(t)$, goes through the matched filter.} By definition, the matched filter is $p^*(-t)$. The output of the matched filter of the $l$th receive antenna $y_l(t)$ is
\begin{equation}
    y_l(t)=\sum_{k=1}^{K}h_{lk}\sum_{m=0}^{N-1}a_k[m]g(t-m\delta T)+\eta_l(t),
\end{equation}
where $g(t)=p(t)\star p^*(-t)$. Moreover, $\eta_l(t)$ can be written as $\eta_l(t)=\xi_l(t)\star p^*(-t)$. After the matched filter, we sample the output $y_l(t)$ at every $\delta T$ seconds.    We write the samples $y_l[n], n=0,\dots, N-1,$ as 
\begin{align}
    y_l[n]&=y_l(n\delta T) \notag \\
    &=\sum_{k=1}^{K}h_{lk}\sum_{m=0}^{N-1}a_k[m]g((n-m))\delta T)+\eta_l(n\delta T) \notag \\ 
    & =\sum_{k=1}^{K}h_{lk}\sum_{m=0}^{N-1}a_k[m]g[n-m]+\eta_l[n].\label{eq:revsamp}
\end{align}

In FTN signaling, we {increase} the symbol rate without changing the pulse shape or the bandwidth. This inevitably leads to ISI since the Nyquist zero-ISI law is violated. This can be shown by the fact that $g((n-m)\delta T)\neq 0$ when $n\neq m$ and thus at sampling instant $n\delta T$, the symbol $a_k[n]$ will receive interference from other symbols due to the non-zero $g((n-m)\delta T)$. We can write \eqref{eq:revsamp} in a vector form as 
\begin{equation}
    \bm{y}_l=\sum_{k=1}^{K}h_{lk}\bm{G}\bm{a}_k+\bm{\eta}_l,
\end{equation}
where $\bm{y}_l=[y_l[0],\dots, y_l[N-1]]^T, \bm{a}_k=[a_k[0],\dots, a_k[N-1]]^T,$ and $\bm{\eta}_l=[\eta_l[0],\dots, \eta_l[N-1]]^T$. The $N \times N$ matrix $\bm{G}$ is formed by $(\bm{G})_{n,m}=g[n-m]$. It is easy to see that the $\bm{G}$ matrix is Hermitian. By collecting the samples from all the receive antennas, we can write the input-output model for the MIMO FTN channel  as 
\begin{equation}
    \bm{Y}=\left(\bm{H}\otimes\bm{G}\right)\bm{A}+\bm{\Omega},
\end{equation}
where $\bm{Y}=[\bm{y}_1^T, \bm{y}_2^T,\dots,\bm{y}_{L}^T]^T, \bm{A}=[\bm{a}_1^T, \bm{a}_2^T,\dots,\bm{a}_{K}^T]^T$, and $\bm{\Omega}=[\bm{\eta}_1^T, \bm{\eta}_2^T,\dots,\bm{\eta}_{L}^T]^T$. The channel matrix $\bm{H}$ contains the channel coefficients for all the transmission links and is defined by $(\bm{H})_{l,k}=h_{lk}$. For ease of notation, we denote the matrix $\bm{H}\otimes \bm{G}$ as $\Tilde{\bm{H}}$. The Gaussian noise vector $\bm{\eta}_l, l=1,\dots, L,$ follows the distribution $\mathcal{CN}\left(\bm{0}_N, \sigma_0^2\bm{G}\right)$, where $\bm{0}_N$ is a zero vector with size $N\times 1$ and $\sigma_0^2$ is the power spectral density (PSD) of $\xi_l(t)$. This shows that due to FTN, the additive noise becomes correlated, meanwhile, in Nyquist signaling, the $\bm{G}$ matrix reduces to the identity matrix and output noise terms become independent. {Since the noise terms $\xi_l(t)$ are independent of each other for all $l$, the matched filter output noise terms $\eta_l(t)$ are also independent of each other for all $l$.} Therefore, the noise vector $\bm{\Omega}$ has the distribution $\mathcal{CN}\left(\bm{0}_{LN}, \sigma_0^2(\bm{I}_L\otimes \bm{G})\right)$. 
\vspace{-0.2cm}

\section{Capacity Derivation} \label{sec:capacityderivation}

{In this section, we first derive the time-domain mutual information expression between the channel input $\bm{A}$ and the channel output $\bm{Y}$. Then we apply the generalized Szeg\"o's theorem to convert the expression into the frequency domain and form the capacity problem in the frequency domain.             Eventually, we find the solution to the optimization problem.}  
\vspace{-0.2cm}

\subsection{Mutual Information Derivation}

In order to find the capacity, we write the  mutual information between the output $\bm{Y}$ and the input $\bm{A}$,  $I(\bm{Y};\bm{A})$ as
\begin{align}
    I(\bm{Y};\bm{A})&=h(\bm{Y})-h(\bm{Y}|\bm{A}) \notag \\
    =&\log_2\det\left(\bm{\Sigma_\Omega}+\Tilde{\bm{H}}\bm{\Sigma_A}\Tilde{\bm{H}}^\dagger\right)-\log_2\det\left(\bm{\Sigma_\Omega}\right). \label{eqn:IYA}
\end{align}
Here, $h(\cdot)$ is the differential entropy. The matrices $\bm{\Sigma_\Omega}=\mathbb{E}\left[\bm{\Omega}\bm{\Omega}^\dagger\right]$, $\bm{\Sigma_A}=\mathbb{E}\left[\bm{A}\bm{A}^\dagger\right]$ are respectively the covariance matrices for the Gaussian noise $\bm{\Omega}$ and the data symbols $\bm{A}$. 
{Since the noise process $\eta_l(t)$ is a stationary zero-mean Gaussian process, the
optimal input is also a stationary zero-mean Gaussian process \cite{Cover}. 
As  $\bm{\Sigma_\Omega}=\sigma_0^2(\bm{I}_L\otimes \bm{G})$,  we calculate  the input covariance matrix $\bm{\Sigma_A}$ as}
\begin{equation}
    \bm{\Sigma_A}=\left[\begin{matrix}
        \bm{\Sigma}_{1,1} & \bm{\Sigma}_{1,2} &\dots & \bm{\Sigma}_{1,K} \\
        \bm{\Sigma}_{2,1} & \bm{\Sigma}_{2,2} &\dots & \bm{\Sigma}_{2,K} \\
        \vdots & \vdots & \ddots & \vdots \\
        \bm{\Sigma}_{K,1} & \bm{\Sigma}_{K,2} &\dots & \bm{\Sigma}_{K,K}
    \end{matrix}\right],
\end{equation}
where $\bm{\Sigma}_{i,j}=\mathbb{E}\left[\bm{a}_i\bm{a}_j^\dagger\right], i,j=1,\dots,K$ and {we can see that $\bm{\Sigma_A}$ is a block matrix}. Since the input processes to the transmit antennas are zero-mean Gaussian processes, the input to each antenna is itself stationary, and inputs of any transmit antenna pairs are jointly stationary. In other words, the autocorrelation function  $R_{k,k}[n,m]=\mathbb{E}[a_k[n]a^*_k[m]]=R_{k,k}[n-m]$ and the cross-correlation function $R_{k,l}[n,m]=\mathbb{E}[a_k[n]a^*_l[m]]=R_{k,l}[n-m]$. Therefore, each block inside the block matrix $\bm{\Sigma_A}$ is a Toeplitz matrix. The entries of an $N\times N$ Toeplitz matrix $\bm{R}$ has the property that $\left(\bm{R}\right)_{i,j}=r_{i-j}, i,j=0,\dots, {N-1}$, in other words, a Toeplitz matrix has the same value on each diagonal. Furthermore, a block Toeplitz matrix is a block matrix where each of its blocks is a Toeplitz matrix. Another important concept we will be using is the generating function of a Toeplitz matrix. It is defined as
\begin{equation}
    \mathcal{G}(\bm{R})=\sum_{k=-\infty}^{\infty}r_ke^{j2\pi f_nk}, f_n\in \left[-\frac{1}{2},\frac{1}{2}\right].\label{eqn:defgenfunc}
\end{equation}

The capacity of the MIMO FTN channel can be obtained by finding the maximum of the asymptotic average mutual information over all input distributions $p(\bm{A})$, namely, 
\begin{equation}
    C_{FTN}=\underset{p(\bm{A})}{\max}\underset{N\rightarrow\infty}{\lim}\frac{1}{N}I(\bm{Y};\bm{A}).\label{eqn:capfirst}
\end{equation}
 The capacity can be found by first taking the limit operation and then finding the optimal input distribution.
 In order to calculate \eqref{eqn:capfirst}, we need to first manipulate the expression \eqref{eqn:IYA}. 
According to \cite{zhang2022faster}, \eqref{eqn:IYA} can be written as
\begin{equation}
    I(\bm{Y},\bm{A})=\log_2\det\left(\bm{I}_{KN}+\sigma_0^{-2}\bm{\Sigma_A}\left(\bm{H}^\dagger\bm{H}\otimes\bm{G}\right)\right), \label{eqn:mutinfsimp}
\end{equation}
where the detailed derivation can be found in \cite[(21)-(29)]{zhang2022faster}. It is straightforward to see that the matrix $\left(\bm{H}^\dagger\bm{H}\otimes\bm{G}\right)$ is a block Toeplitz matrix since $\bm{G}$ itself is Toeplitz. We know from \cite[Theorem 2]{jesus} that the product of block Toeplitz matrices is asymptotically Toeplitz, so the matrix product $\bm{\Sigma_A}\left(\bm{H}^\dagger\bm{H}\otimes\bm{G}\right)$ inside the determinant in \eqref{eqn:mutinfsimp} is asymptotically Toeplitz. Similarly, the matrix $\bm{I}_{LN}+\sigma_0^{-2}\bm{\Sigma_A}\left(\bm{H}^\dagger\bm{H}\otimes\bm{G}\right)$ is also asymptotically Toeplitz. 
%
Using this fact, we can find the limit of $\underset{N\rightarrow\infty}{\lim}\frac{1}{N}I(\bm{Y};\bm{A})$ 
 by invoking the generalized Szeg\"o's theorem \cite[Theorem 3]{ganji2018novel}, which is  given in Lemma \ref{lem:gensze}.
\begin{lemma}{\cite[Theorem 3]{ganji2018novel}}
\label{lem:gensze}
    Assume $\bm{T}$ is a $NK\times NK$ block Toeplitz matrix with the structure 
    \begin{equation}
        \bm{T}=\left[\begin{matrix}
        \bm{T}_{1,1} & \bm{T}_{1,2} &\dots & \bm{T}_{1,K} \\
        \bm{T}_{2,1} & \bm{T}_{2,2} &\dots & \bm{T}_{2,K} \\
        \vdots & \vdots & \ddots & \vdots \\
        \bm{T}_{K,1} & \bm{T}_{K,2} &\dots & \bm{T}_{K,K}
    \end{matrix}\right],
    \end{equation}
    where $\bm{T}_{i,j}$ are $N\times N$ Toeplitz matrices. We have 
    \begin{equation}
        \underset{N\rightarrow\infty}{\lim} \frac{1}{N}\sum_{i=1}^{KN}F\left(\lambda_i(\bm{T})\right)=\int_{-\frac{1}{2}}^{\frac{1}{2}}\sum_{k=1}^KF\left(\lambda_j(\bm{T}(f_n))\right)df_n,
    \end{equation}
    where $\lambda_i(\cdot)$ means the $i$th eigenvalue of the matrix inside the parenthesis, and $F(\cdot)$ is a  continuous function defined over the range of $f_n$. With a slight abuse of notation, which can be confused with the matrix $\bm{T}$ itself, We denote the generating matrix of the block Toeplitz matrix $\bm{T}$ as $\bm{T}(f_n)$.
    The $K\times K$ matrix $\bm{T}(f_n)$ is composed of the generating functions of the Toeplitz matrices $\bm{T}_{i,j}, i,j=1,\dots, K$. The entries of $\bm{T}(f_n)$ are calculated as $\left(\bm{T}(f_n)\right)_{i,j}=\mathcal{G}(\bm{T}_{i,j})$. 
\end{lemma}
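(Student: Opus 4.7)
The plan is to establish the identity first for polynomial test functions $F(x)=x^p$ and then extend to continuous $F$ by a Weierstrass-type approximation, handling the block structure by reducing to the scalar Szeg\"o theorem via a block-circulant companion matrix. The appeal of this route is that block-circulant matrices with $K\times K$ blocks are exactly block-diagonalized by the Fourier basis $\bm{F}_N\otimes\bm{I}_K$, so their eigenvalues can be written down explicitly in terms of the generating matrix $\bm{T}(f_n)$.

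First I would introduce a block-circulant surrogate $\bm{C}$ whose generating sequence agrees with that of $\bm{T}$ on a central window of indices and wraps around at the ends. Because $\bm{C}$ is unitarily equivalent to a block-diagonal matrix whose $n$th block is $\bm{T}(n/N)$, the eigenvalues of $\bm{C}$ are precisely the union of the eigenvalues of the $K\times K$ matrices $\bm{T}(n/N)$ for $n=0,\dots,N-1$. Hence for any continuous $F$ the quantity $\frac{1}{N}\sum_i F(\lambda_i(\bm{C}))$ is exactly a Riemann sum for $\int_{-1/2}^{1/2}\sum_k F(\lambda_k(\bm{T}(f_n)))\,df_n$, and standard Riemann integrability, guaranteed once $F$ is continuous and the entries of $\bm{T}(f_n)$ are bounded, delivers the stated right-hand side in the limit.

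Next I would show that $\bm{T}$ and $\bm{C}$ are asymptotically equivalent in Gray's sense, namely $\|\bm{T}-\bm{C}\|_F^2/N\to 0$ while both remain uniformly bounded in spectral norm. This step rests on the absolute summability of the matrix Fourier coefficients underlying the blocks, which is inherited from the regularity of the generating functions $\mathcal{G}(\bm{T}_{i,j})$ defined in \eqref{eqn:defgenfunc}. Once asymptotic equivalence is secured, the classical lemma that asymptotically equivalent Hermitian sequences share the same limiting eigenvalue distribution, proved by applying the Hoffman--Wielandt inequality to $\mathrm{tr}(\bm{T}^p-\bm{C}^p)$ for each fixed $p$, transfers the identity from $\bm{C}$ to $\bm{T}$ for every polynomial $F$. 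A Weierstrass approximation on a compact interval containing all spectra then upgrades the result to arbitrary continuous $F$.

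The main obstacle is controlling the boundary/wrap-around discrepancy that distinguishes $\bm{T}$ from $\bm{C}$ uniformly in $N$. Concretely, one must show that the $O(1)$ number of corrupted off-diagonal blocks contributes only $o(N)$ to $\mathrm{tr}(\bm{T}^p-\bm{C}^p)$ for each fixed $p$; this is exactly where absolute summability of the block Fourier coefficients and uniform boundedness of $\|\bm{T}(f_n)\|$ are essential. If only $L^\infty$ generating entries are available, the same estimate must be obtained through a Ces\`aro-type smoothing of the symbol. With these ingredients in place, the integral identity follows for every continuous $F$ as $N\to\infty$.
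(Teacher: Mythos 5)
The paper does not actually prove this lemma: it is imported verbatim as a citation to \cite[Theorem 3]{ganji2018novel}, so there is no in-paper argument to compare against. Your sketch is the standard proof of the block Szeg\"o theorem found in the literature (Gray's asymptotic-equivalence machinery in the scalar case, extended to matrix symbols): replace $\bm{T}$ by a block-circulant $\bm{C}$ whose eigenvalues are exactly those of $\bm{T}(n/N)$, show $\frac{1}{\sqrt{N}}\|\bm{T}-\bm{C}\|_F\to 0$ with uniformly bounded operator norms, transfer $\frac{1}{N}\mathrm{tr}(\cdot^p)$ via a telescoping bound, and finish with Weierstrass. The outline is sound and the key technical burden you identify (the $O(1)$ corrupted boundary blocks contributing $o(N)$ to $\mathrm{tr}(\bm{T}^p-\bm{C}^p)$, resting on absolute summability of the block Fourier coefficients as in \eqref{eqn:defgenfunc}) is indeed where the work lies. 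Two caveats are worth making explicit. First, the Hoffman--Wielandt inequality is not the natural tool for $\mathrm{tr}(\bm{T}^p-\bm{C}^p)$; the usual route writes $\bm{T}^p-\bm{C}^p=\sum_{j=0}^{p-1}\bm{T}^{j}(\bm{T}-\bm{C})\bm{C}^{p-1-j}$ and uses $|\mathrm{tr}(\bm{A}\bm{B})|\le\|\bm{A}\|_F\|\bm{B}\|_F$ together with the uniform spectral-norm bound; Hoffman--Wielandt instead gives direct eigenvalue pairing for Hermitian matrices, which is an alternative but should not be conflated with the trace-of-powers step. Second, and more substantively, the Weierstrass extension to arbitrary continuous $F$ requires all spectra to lie in a common compact \emph{real} interval, i.e.\ the blocks must form a Hermitian (or at least uniformly diagonalizable with real spectrum) sequence; the lemma as stated omits this hypothesis, and your proof silently assumes it. In the paper's application the relevant matrix is a product of positive semidefinite block Toeplitz factors, which has real nonnegative eigenvalues, so the conclusion survives, but a fully rigorous write-up must either add the Hermitian hypothesis or justify the reduction for such products.
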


\begin{lemma}{\cite[Theorem 2]{jesus}}\label{lem:toepmprod}
    The generating matrix of the product of two block Toeplitz matrices is the product of the generating matrices of these two matrices.
\end{lemma}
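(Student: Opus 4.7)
The plan is to reduce Lemma~\ref{lem:toepmprod} to the classical scalar Toeplitz-product identity and then reassemble contributions block by block. Write the two $KN\times KN$ block Toeplitz matrices as $\bm{T}$ and $\bm{S}$, with $N\times N$ Toeplitz blocks $\bm{T}_{i,k}$ and $\bm{S}_{k,j}$. The $(i,j)$-block of the product is
\begin{equation}
(\bm{T}\bm{S})_{i,j}=\sum_{k=1}^{K}\bm{T}_{i,k}\bm{S}_{k,j},
\end{equation}
while the $(i,j)$-entry of $\bm{T}(f_n)\bm{S}(f_n)$ is $\sum_{k=1}^{K}\mathcal{G}(\bm{T}_{i,k})\mathcal{G}(\bm{S}_{k,j})$. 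Hence it suffices to show that the product of two scalar Toeplitz matrices is asymptotically Toeplitz with generating function equal to the product of the two generating functions, and that summing $K$ such products preserves this property.

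For the scalar reduction I would use Fourier arithmetic. Writing $(\bm{A})_{p,q}=a_{p-q}$ and $(\bm{B})_{p,q}=b_{p-q}$ with symbols $a(f_n)=\sum_m a_m e^{j2\pi f_n m}$ and $b(f_n)=\sum_m b_m e^{j2\pi f_n m}$, one has
\begin{equation}
(\bm{A}\bm{B})_{p,q}=\sum_{l=0}^{N-1}a_{p-l}b_{l-q}.
\end{equation}
Extending the summation from $\{0,\dots,N-1\}$ to $\mathbb{Z}$ gives $\sum_{l\in\mathbb{Z}}a_{p-l}b_{l-q}$, which depends only on $p-q$ and equals the $(p-q)$-th Fourier coefficient of $a(f_n)b(f_n)$. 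This is precisely the generating sequence of a Toeplitz matrix whose symbol is $\mathcal{G}(\bm{A})\mathcal{G}(\bm{B})$; the difference between $\bm{A}\bm{B}$ and this ideal Toeplitz matrix is a boundary correction. Applying this component-wise and summing over $k$ identifies the asymptotic generating function of the $(i,j)$-block of $\bm{T}\bm{S}$ with $\bigl(\bm{T}(f_n)\bm{S}(f_n)\bigr)_{i,j}$, establishing the claim.

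The main obstacle is making the phrase \emph{asymptotically Toeplitz} precise: the product of two Toeplitz matrices is not genuinely Toeplitz because the convolution index is truncated to $\{0,\dots,N-1\}$. I would control this by bounding the Frobenius norm of the boundary correction by $o(N)$ under standard regularity conditions on the generating functions (for instance, absolutely summable Fourier coefficients), so that the discrepancy lives in an $O(1)$-thick band along the edges of each block. Such a correction vanishes in the normalized-trace limit required by Lemma~\ref{lem:gensze} and therefore does not perturb the asymptotic eigenvalue distribution used in the capacity derivation. With this justified, the identity $\mathcal{G}(\bm{T}\bm{S})(f_n)=\bm{T}(f_n)\bm{S}(f_n)$ follows entry-wise.
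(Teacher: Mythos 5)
First, note that the paper does not prove this lemma at all: it is imported verbatim as a citation to an external reference, so there is no in-paper argument to compare against. Your proposal therefore has to stand on its own, and its overall route --- reduce to the scalar case via the block formula $(\bm{T}\bm{S})_{i,j}=\sum_{k}\bm{T}_{i,k}\bm{S}_{k,j}$, recognize the full-line convolution $\sum_{l\in\mathbb{Z}}a_{p-l}b_{l-q}$ as the Fourier coefficients of the product symbol, and treat the truncation to $l\in\{0,\dots,N-1\}$ as an asymptotically negligible perturbation --- is exactly the standard argument behind the cited result. The algebraic core is correct.

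The gap is in the quantitative control of the boundary correction, and as stated it does not close. Asymptotic equivalence of matrix sequences (the notion under which Szeg\H{o}-type limits transfer) requires the \emph{normalized} Hilbert--Schmidt norm of the difference to vanish, i.e.\ $\|\bm{A}\bm{B}-\bm{C}_N\|_F=o(\sqrt{N})$, together with uniform boundedness of both sequences in spectral norm; you only state $\|\cdot\|_F=o(N)$, which is strictly weaker and insufficient (a correction with $\|\cdot\|_F=\Theta(\sqrt{N})$ satisfies your bound but does not vanish in the normalized limit). Worse, your own description of the correction --- ``an $O(1)$-thick band along the edges'' --- contains $\Theta(N)$ entries of size $O(1)$ and hence has Frobenius norm exactly $\Theta(\sqrt{N})$, which would \emph{fail} the required condition if taken literally. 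What actually saves the argument for absolutely summable symbols is that the correction entries decay along that band and concentrate near the corners of each block, so that $\frac{1}{N}\sum_{p,q}\big|\sum_{l\notin\{0,\dots,N-1\}}a_{p-l}b_{l-q}\big|^2\to 0$; the clean way to prove this is to establish the identity exactly (up to finitely many corner entries) for banded symbols and then approximate general symbols by banded ones in the relevant norms. You should also state the uniform spectral-norm bound $\|\bm{A}_N\|\leq\sum_m|a_m|$ explicitly, since without it the equality of asymptotic eigenvalue distributions needed for the $\log_2\det$ functional in Lemma~\ref{lem:gensze} does not follow from Hilbert--Schmidt closeness alone.
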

\begin{lemma}\label{lem:toepmsum}
    The generating matrix of the sum of two block Toeplitz matrices is the sum of the generating matrices of these two matrices.
\end{lemma}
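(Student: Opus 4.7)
The plan is to proceed directly from the definition of the generating matrix in \eqref{eqn:defgenfunc} and exploit the fact that both block structure and Toeplitz structure are preserved under addition. Let $\bm{T}$ and $\bm{S}$ be two $KN\times KN$ block Toeplitz matrices whose $N\times N$ blocks $\bm{T}_{i,j}$ and $\bm{S}_{i,j}$ are indexed the same way. I would first observe that $\bm{T}+\bm{S}$ is block Toeplitz with blocks $(\bm{T}+\bm{S})_{i,j}=\bm{T}_{i,j}+\bm{S}_{i,j}$, and that the sum of two Toeplitz matrices is Toeplitz because the defining property $(\bm{R})_{p,q}=r_{p-q}$ is linear in the sequence $\{r_k\}$. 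Thus, if $\bm{T}_{i,j}$ has diagonal values $t_k^{(i,j)}$ and $\bm{S}_{i,j}$ has diagonal values $s_k^{(i,j)}$, the block $(\bm{T}+\bm{S})_{i,j}$ is a Toeplitz matrix whose diagonal values are $t_k^{(i,j)}+s_k^{(i,j)}$.

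Next, I would compute the generating matrix entry by entry. By \eqref{eqn:defgenfunc} applied to the $(i,j)$ block of $\bm{T}+\bm{S}$,
\begin{equation}
    \big((\bm{T}+\bm{S})(f_n)\big)_{i,j}=\sum_{k=-\infty}^{\infty}\bigl(t_k^{(i,j)}+s_k^{(i,j)}\bigr)e^{j2\pi f_n k}.
\end{equation}
Splitting the absolutely convergent series by linearity gives $\mathcal{G}(\bm{T}_{i,j})+\mathcal{G}(\bm{S}_{i,j})=\bigl(\bm{T}(f_n)\bigr)_{i,j}+\bigl(\bm{S}(f_n)\bigr)_{i,j}$. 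Since this entry-wise identity holds for every $1\le i,j\le K$, it assembles into the matrix identity $(\bm{T}+\bm{S})(f_n)=\bm{T}(f_n)+\bm{S}(f_n)$, which is exactly the claim.

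There is essentially no main obstacle: unlike Lemma \ref{lem:toepmprod} for products, which requires an asymptotic argument because the product of two Toeplitz matrices is only asymptotically Toeplitz, the sum of two Toeplitz matrices is exactly Toeplitz. The only mild point of care is justifying the interchange of summation that underlies the linearity step, which is harmless as long as one is operating in the same function space (e.g.\ $L^2([-\tfrac{1}{2},\tfrac{1}{2}])$) in which the generating functions of $\bm{T}_{i,j}$ and $\bm{S}_{i,j}$ are well defined; this is the standing assumption used throughout the earlier Szeg\"o-type results invoked in the paper, so no new hypotheses are needed.
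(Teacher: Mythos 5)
Your proof is correct. The paper actually states Lemma~\ref{lem:toepmsum} without any proof, treating it as immediate; your argument---that the sum of (block) Toeplitz matrices is exactly (block) Toeplitz and that the generating function in \eqref{eqn:defgenfunc} is linear in the diagonal sequences, so the identity follows entrywise---is precisely the routine justification the authors omitted, and your remark that no asymptotic argument is needed (in contrast to the product case of Lemma~\ref{lem:toepmprod}) correctly identifies why the sum case is elementary.
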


 By applying Lemma \ref{lem:gensze}, we have the derivation below 
 \begin{align}
     \underset{N\rightarrow\infty}{\lim}\frac{1}{N}I(\bm{Y};\bm{A})=&\int_{-\frac{1}{2}}^{\frac{1}{2}}\log_2\det\big(\bm{I}_{KN}(f_n)\notag\\
     &+\sigma_0^{-2}\bm{\Sigma_A}(f_n)\bm{H}^\dagger\bm{H}G_d(f_n)\big)df_n, \label{eqn:limend}
 \end{align}  due to Lemma \ref{lem:toepmprod} and \ref{lem:toepmsum}, and the fact that the matrices $\bm{I}_{KN}(f_n)$, $\bm{\Sigma_A}(f_n)$,  and $\bm{H}^\dagger\bm{H}G_d(f_n)$ are respectively the generating matrices for $\bm{I}_{KN}$, $\bm{\Sigma_A}(f_n)$, and $\bm{H}^\dagger\bm{H}G_d(f_n)$. 
From \eqref{eqn:defgenfunc} we know that the generating matrix of $\bm{I}_{KN}$ is $\bm{I}_K$, and the generating function of $\bm{G}$ is $G_d(f_n)$, the definition of which can be found in \cite[(90)]{zhang2022faster}. The function  $G_d(f_n)$ is also called the folded spectrum in the literature of FTN \cite{rusek}.
 The folded spectrum is periodical with period $1$, and the shape of it changes with the acceleration factor $\delta$. When an RRC pulse with roll-off factor $\beta$ is used for $p(t)$, $G(f)$, which is the continuous time Fourier transform of $g(t)$, will be duplicated and then shifted by $\frac{m}{\delta T}, m\in\mathbb{Z}$. In the end, the spectrum will be scaled by $\frac {1}{\delta T}$ both in frequency and amplitude. When $p(t)$ is an RRC pulse, the spectrum $G(f)$ is non-zero from $-\frac{(1+\beta)}{2T}$ to $\frac{(1+\beta)}{2T}$. Therefore as $\delta$ becomes smaller than $\frac{1}{1+\beta}$, the support of the folded-spectrum $G_d(f_n)$ will be from $-\frac{(1+\beta)}{2T}$ to $\frac{(1+\beta)}{2T}$ and there will be zero parts in $[-\frac{1}{2},\frac{1}{2}]$. 
\begin{figure}
    \centering
    \includegraphics[width=0.7\linewidth]{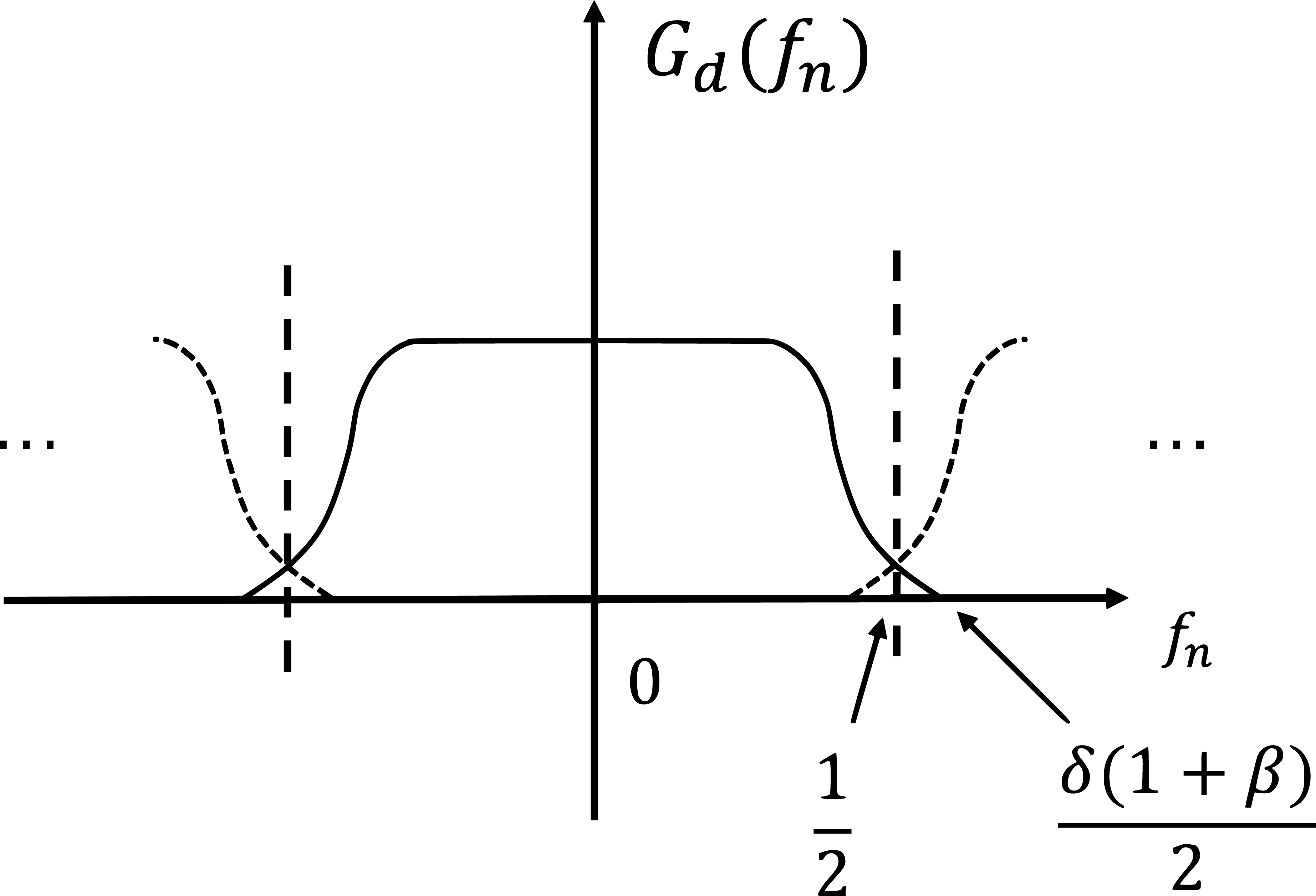}
    \caption{Folded spectrum for RRC pulse with roll-off factor $\beta$.}
    \label{fig:foldspec}
\end{figure}
In Fig. \ref{fig:foldspec}, we give an example of $G_d(f_n)$ where an RRC pulse is used with roll-off factor $\beta$. For ease of exposition, we assume that the pulse $p(t)$ in this paper is also an RRC function with roll-off factor $\beta$.  
{However, the discussion in this paper is not confined to RRC pulses, the shaping pulse $p(t)$ can be switched to other pulse shapes according to the practical design constraints. Furthermore, the analysis in this paper can also be extended to nonorthogonal waveforms \cite{wunan1, wunan2}.}

We next define $\bm{W}=\bm{H}^\dagger\bm{H}$. Since $\bm{W}$ is a Hermitian matrix, it has the eigenvalue decomposition $\bm{W}=\bm{U}\bm{\Gamma}\bm{U}^\dagger$, where $\bm{U}$ is a unitary matrix and $\bm{\Gamma}$ is a diagonal matrix.  The eigenvalues of $\bm{W}$, $\tau_i, i=1,\dots, K$, are on the main diagonal, i.e.,  $\bm{\Gamma}=\text{diag}\left[\tau_1, \tau_2,\dots, \tau_{K}\right]$. 
Then we can upper bound  \eqref{eqn:limend}  by applying the generalized Hadamard inequality as in \eqref{eqn:ub}. 
\begin{eqnarray}
    \lefteqn{\underset{N\rightarrow\infty}{\lim}\frac{1}{N}I(\bm{Y};\bm{A})} \nonumber \\
    &=&\int_{-\frac{1}{2}}^{\frac{1}{2}}\log_2\det\left(\bm{I}_K+\sigma_0^{-2}G_d(f_n)\bm{U}^\dagger\bm{\Sigma_A}(f_n)\bm{U}\bm{\Gamma}\right)df_n  \notag\\
    &\leq& \int_{-\frac{1}{2}}^{\frac{1}{2}}\log_2\det\left(\bm{I}_K+\sigma_0^{-2}G_d(f_n)\bm{\Phi}(f_n)\bm{\Gamma}\right)df_n. \label{eqn:ub}
\end{eqnarray}
The upper bound will be achieved if $\bm{\Sigma_A}(f_n)$ can be diagonalized into $\bm{\Sigma_A}(f_n)=\bm{U}\bm{\Phi}(f_n)\bm{U}^\dagger$, where $\bm{\Phi}=\text{diag}\left[\phi_1(f_n), \phi_2(f_n), \dots,\phi_{K}(f_n)\right]$. We call the functions $\phi_k(f_n), k=1,\dots, K$, as data spectrum, since they are only related to the distribution of the input data symbols. Therefore, \eqref{eqn:ub} can be written as 
\begin{align}
 \lefteqn{   \underset{N\rightarrow\infty}{\lim}\frac{1}{N}I(\bm{Y};\bm{A})} \nonumber \\
 &=&\sum_{k=1}^K\int_{-\frac{1}{2}}^{\frac{1}{2}}\log_2\left(1+\sigma_0^{-2}G_d(f_n)\phi_k(f_n)\tau_k\right)df_n. \label{eqn:obj}
\end{align}
We then proceed to obtain the power constraint. Assume that the transmission has the power limit of $P$. The power constraint expression for transmitting $N$ symbols is 
\begin{align}
    P_{TX}&=\mathbb{E}\left[\frac{1}{N\delta T}\sum_{k=1}^{K}\int_{-\infty}^{\infty}|x_k(t)|^2dt\right] \notag\\
    &=\frac{1}{N\delta T}\text{tr}\left(\left(\bm{I}\otimes\bm{G}\right)\bm{\Sigma_A}\right)\leq P. \label{eqn:powconstres}
\end{align}
 The detailed derivation of \eqref{eqn:powconstres} can be found in (33)-(37) of \cite{zhang2022faster}. 
As the number of symbols goes to infinity, we can apply Szeg\"o's theorem again,
\begin{eqnarray}
    \lefteqn{\underset{N\rightarrow\infty}{\lim}\frac{1}{N\delta T}\text{tr}\left(\left(\bm{I}\otimes\bm{G}\right)\bm{\Sigma_A}\right)} \nonumber \\
    &=&\frac{1}{\delta T}\int_{-\frac{1}{2}}^{\frac{1}{2}}\sum_{k=1}^KG_d(f_n)\phi_k(f_n)df_n.\label{eqn:powcons}
\end{eqnarray}
Eventually, we are able to form our optimization problem for the channel capacity by combining \eqref{eqn:obj} and \eqref{eqn:powcons}, and write
    \begin{align}
        \lefteqn{C_{FTN}(P,\delta)}\\       &=\underset{\substack{\phi_k(f_n),\\k=1,\dots,K}}{\max}\sum_{k=1}^K\int_{-\frac{1}{2}}^{\frac{1}{2}}\log_2\left(1+\sigma_0^{-2}G_d(f_n)\phi_k(f_n)\tau_k\right)df_n \notag\\
        &s.t.~\frac{1}{\delta T}\int_{-\frac{1}{2}}^{\frac{1}{2}}\sum_{k=1}^KG_d(f_n)\phi_k(f_n)df_n\leq P. \notag
    \end{align}
Note that as $\delta<\frac{1}{1+\beta}$, there will be zero parts for the folded spectrum in $\left[-\frac{1}{2},\frac{1}{2}\right]$. Therefore we should perform the optimization on the support of the folded spectrum. We denote the support of $G_d(f_n)$ in $\left[-\frac{1}{2},\frac{1}{2}\right]$ as $\mathcal{S}$ and the optimization problem becomes
    \begin{align}
        \lefteqn{C_{FTN}(P,\delta)}\label{eqn:capobj}\\
        &=\underset{\substack{\phi_k(f_n),\\k=1,\dots,K}}{\max}\sum_{k=1}^K\int_{\mathcal{S}}\log_2\left(1+\sigma_0^{-2}G_d(f_n)\phi_k(f_n)\tau_k\right)df_n \notag \\
        &s.t.~\frac{1}{\delta T}\int_{\mathcal{S}}\sum_{k=1}^KG_d(f_n)\phi_k(f_n)df_n\leq P.\notag
    \end{align}
We write the Karush–Kuhn–Tucker (KKT) conditions as 
    \begin{align}
        \frac{\sigma_0^{-2}G_d(f_n)\tau_k}{1+\sigma_0^{-2}G_d(f_n)\tau_k}-\mu G_d(f_n)-v_k(f_n)&=0 \\
        \mu\left(\frac{1}{\delta T}\int_{\mathcal{S}}\sum_{k=1}^KG_d(f_n)\phi_k(f_n)df_n- P\right)&=0 \\
        v_k(f_n)\phi_k(f_n)&=0,
    \end{align}
where $\mu$ and $v_k(f_n)$'s are the Lagrange multipliers. The solution to this problem is 
\begin{equation}
   \bar{\phi}_k(f_n)=\begin{cases}
          \frac{\delta T}{G_d(f_n)}\left(\frac{1}{\mu}-\frac{1}{\tau_k}\right)^+,& \frac{1}{1+\beta}\leq\delta\leq1  \\
          \frac{ T}{G_d(f_n)(1+\beta)}\left(\frac{1}{\mu}-\frac{1}{\tau_k}\right)^+,& 0<\delta<\frac{1}{1+\beta}
    \end{cases}, \label{eqn:optsolu}
\end{equation}
for $k=1,\dots,K, f_n\in \mathcal{S}$, where $\bar{\phi}_k(f_n)$ means the optimum $\phi_k(f_n)$. The Lagrange multiplier $\mu$ can be found by solving
\begin{equation}
    \frac{1}{\delta T}\int_{\mathcal{S}}\sum_{k=1}^K\left(\frac{1}{\mu}-\frac{1}{\tau_k}\right)^+df_n=P.\label{eqn:mu}
\end{equation}
In order to find the channel capacity, we need to plug the optimum solution \eqref{eqn:optsolu} back into \eqref{eqn:capobj}. For ease of representation,  we call the power allocated to the $k$th eigen-channel $\left(\frac{1}{\mu}-\frac{1}{\tau_k}\right)^+$ as $\sigma^2_k$, {and $\sum_{k=1}^K\sigma^2_k=P$}. We denote the Lebesgue measure of $\mathcal{S}$ as $|\mathcal{S}|$. When $\delta\geq\frac{1}{1+\beta}$, $|\mathcal{S}|=1$, then combining \eqref{eqn:capobj} and \eqref{eqn:optsolu}, the capacity expression is written as  
\begin{equation}
    C_{FTN}(P,\delta)=\sum_{k=1}^K\log_2\left(1+\frac{\sigma^2_k\delta T\tau_k}{\sigma_0^2}\right), \text{for}~\frac{1}{1+\beta}\leq\delta\leq1. \label{eqn:capbdbps}
\end{equation}
   For RRC pulses with roll-off factors that satisfy $\delta<\frac{1}{1+\beta}$, the Lebesgue measure  $|\mathcal{S}|=\delta(1+\beta)$ and $|\mathcal{S}|<1$. Then, the capacity expression becomes
\begin{equation}
    C_{FTN}(P,\delta)=\delta(1+\beta)\sum_{k=1}^K\log_2\left(1+\frac{\sigma^2_k T\tau_k}{\sigma_0^2(1+\beta)}\right),  \label{eqn:capsdbps}
\end{equation}
for $0<\delta<\frac{1}{1+\beta}$. Both the capacity in \eqref{eqn:capbdbps} and \eqref{eqn:capsdbps} are in bits/symbol. We normalize the capacity to convert the unit to bits/s/Hz and induce the following theorem.
\begin{theorem}
\label{thm:them1}
    The capacity (in bits/s/Hz) of the MIMO FTN channel with $K$ transmit and $L$ receive antennas, employing RRC pulses with roll-off factor $\beta$ is equal to 
    \begin{align}
    \lefteqn{C_{FTN}(P,\delta)}\notag\\
    &=\begin{cases}
        \frac{1}{\delta(1+\beta)}\sum_{k=1}^K\log_2\left(1+\frac{\sigma^2_k\delta T\tau_k}{\sigma_0^2}\right),& ~\frac{1}{1+\beta}\leq\delta\leq1  \\
        \sum_{k=1}^K\log_2\left(1+\frac{\sigma^2_k T\tau_k}{\sigma_0^2(1+\beta)}\right),& ~0<\delta<\frac{1}{1+\beta}
    \end{cases}.\vspace{-5mm}
    \end{align}
\end{theorem}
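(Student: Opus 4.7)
My plan is to package the chain of derivations already assembled in Section~\ref{sec:capacityderivation} into two clean cases and then perform the bits/symbol to bits/s/Hz conversion at the very end. First, I would start from the simplified mutual information \eqref{eqn:mutinfsimp}, invoke the block-Toeplitz structure together with Lemma~\ref{lem:gensze}, \ref{lem:toepmprod}, and \ref{lem:toepmsum} to pass to the frequency-domain integral \eqref{eqn:limend}, and then apply the generalized Hadamard inequality as in \eqref{eqn:ub} to obtain a per-eigenchannel sum. The critical observation is that the Hadamard bound is tight exactly when the data-spectrum matrix $\bm{\Sigma_A}(f_n)$ shares eigenvectors with $\bm{H}^\dagger\bm{H}$; because we are maximizing over $\bm{\Sigma_A}$, we are free to choose this alignment, so the bound is achievable.

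Next, I would pair the objective \eqref{eqn:obj} with the asymptotic power constraint \eqref{eqn:powcons} and solve the resulting convex program over the support $\mathcal{S}$ of $G_d(f_n)$. The KKT stationarity condition gives a water-filling shape in $f_n$ that exactly cancels $G_d(f_n)$ in the argument of $\log_2$, leaving only an $f_n$-independent contribution on $\mathcal{S}$. This is what produces $\bar{\phi}_k(f_n)$ in \eqref{eqn:optsolu}. Substituting back and using that the integrand is constant on $\mathcal{S}$, the capacity integral collapses to $|\mathcal{S}|\sum_k \log_2(1+\sigma_k^2 \tau_k / \text{stuff})$, where the power per eigen-channel $\sigma_k^2=(1/\mu-1/\tau_k)^+$ is determined by \eqref{eqn:mu} and satisfies $\sum_k \sigma_k^2 = P$.

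The main case split and the main technical subtlety both lie in computing $|\mathcal{S}|$. For $\delta\geq 1/(1+\beta)$ the folded spectrum covers $[-1/2,1/2]$ so $|\mathcal{S}|=1$, yielding \eqref{eqn:capbdbps}. For $\delta<1/(1+\beta)$ the folded replicas of $G(f)$ no longer tile the Nyquist band, so $|\mathcal{S}|=\delta(1+\beta)<1$, and the $1/(\delta T)$ prefactor in the water-filling equation \eqref{eqn:mu} has to be rebalanced against this reduced support before plugging back into the objective; the bookkeeping here is what produces the $T/((1+\beta)G_d(f_n))$ form in the second branch of \eqref{eqn:optsolu} and the expression \eqref{eqn:capsdbps}. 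I expect this bookkeeping to be the main obstacle, because one must simultaneously track how $|\mathcal{S}|$ appears in the KKT solution, in the integrated power, and as a multiplicative prefactor in the capacity.

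Finally, I would convert from bits/symbol to bits/s/Hz. Symbols are sent at rate $1/(\delta T)$ and the RRC pulse occupies bandwidth $(1+\beta)/T$, so the unit-conversion factor is $\frac{1/(\delta T)}{(1+\beta)/T}=\frac{1}{\delta(1+\beta)}$. Multiplying \eqref{eqn:capbdbps} by this factor gives the first branch of Theorem~\ref{thm:them1}, while multiplying \eqref{eqn:capsdbps} by it cancels the prefactor $\delta(1+\beta)$ and gives the second branch. This last step is purely a normalization, so once the two bits/symbol expressions are established the theorem follows immediately.
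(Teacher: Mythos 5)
Your proposal is correct and follows essentially the same route as the paper: frequency-domain conversion via the generalized Szeg\"o theorem, the Hadamard bound made tight by aligning $\bm{\Sigma_A}(f_n)$ with the eigenvectors of $\bm{H}^\dagger\bm{H}$, KKT-based spectrum inversion over the support $\mathcal{S}$ with spatial water-filling, the case split on $|\mathcal{S}|$ being $1$ versus $\delta(1+\beta)$, and the final normalization by $\frac{1}{\delta(1+\beta)}$. No gaps to report.
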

\begin{remark}
    The power constraint in \eqref{eqn:powconstres} means fixed average transmission power. For this constraint, when $0<\delta<\frac{1}{1+\beta}$, the capacity is independent of $\delta$. The capacity for $\frac{1}{1+\beta}\leq\delta\leq1$ is identical with the result in \cite{zhang2022faster}. Therefore, the capacity increases as $\delta$ decreases, and when $\delta$ is at the threshold value $\frac{1}{1+\beta}$, the capacity reaches its maximum and stays the same as $\delta$ keeps decreasing to 0. {As $\delta$ decreases below the threshold, the amount of information each symbol can carry also decreases due to severe ISI. However, since the signaling rate increases at the same time, it compensates for the decreased information per symbol. In other words, although the capacity decreases in bits/symbol, after normalization by signaling rate, the resulting capacity in bits/s/Hz stays the same.} 
\end{remark}
\begin{remark}
    For FTN, the capacity is 0 for $\delta=0$.  If $\delta=0$, all the symbols will be sent all at once and it is impossible to sample each symbol separately. This case then becomes equivalent to sending only one symbol and the capacity of sending a single symbol is known to be zero.  
\end{remark}
\begin{remark}
    The capacity-achieving input distribution is obtained by the combination of 
 spatial domain water-filling and frequency domain spectrum inversion.  The
    power allocated to the $k$th eigen-channel in the optimum data spectrum in \eqref{eqn:optsolu} is determined by spatial domain water-filling, and the shape of the spectrum for each eigenchannel is determined by the frequency domain inverted spectrum $\frac{1}{G_d(f_n)}$.
\end{remark}

\subsection{Different Power Allocation Schemes}\label{sec:diffpowallo}
We have obtained the optimal power allocation over the spatial and frequency domains in \eqref{eqn:optsolu}, which means we perform water-filling in the spatial domain and channel inversion in the frequency domain. We denote this scheme as $O_sO_f$, meaning optimal power allocation both in the spatial and the frequency domains. {In \cite{zhang2022faster}, MIMO FTN capacity for $\delta\geq\frac{1}{1+\beta}$ is derived, and the capacity-achieving power allocation scheme is the same as $O_sO_f$ in \eqref{eqn:optsolu} for $\delta\geq\frac{1}{1+\beta}$.}  We now proceed to investigate other power allocation schemes.  
\subsubsection{Suboptimal in space and optimal in frequency}\label{sec:ssof}
Secondly, we study uniform power allocation in the spatial domain and optimal power allocation in the frequency domain. {We denote this scheme as $S_sO_f$ as suboptimal in the spatial domain and optimal in the frequency domain.} In this case, for $f_n\in \mathcal{S}$, the data spectrum $\phi_k(f_n), k=1,\dots, K$, follows 
\begin{equation}
\phi_k(f_n)=\left\{
\begin{matrix}
     \frac{P}{K}\frac{\delta T}{G_d(f_n)}, &\frac{1}{1+\beta}\leq\delta\leq1  \\
      \frac{P}{K}\frac{ T}{G_d(f_n)(1+\beta)}, &0<\delta<\frac{1}{1+\beta}
\end{matrix}\right.  . \label{eqn:ssof}
\end{equation}
\subsubsection{Optimal in space and suboptimal in frequency}\label{sec:ossf}
In the third scheme, the data spectrum follows
\begin{equation}
    \phi_k(f_n)=\left\{
\begin{matrix}
     P_k\delta T,  &\frac{1}{1+\beta}\leq\delta\leq1  \\
      \frac{P_kT}{1+\beta}, &0<\delta<\frac{1}{1+\beta}
\end{matrix}\right.  , \label{eqn:ossf}
\end{equation}
for $f_n\in \mathcal{S}$. {We denote this scheme as $O_sS_f$ meaning optimal in the spatial domain and suboptimal in the frequency domain.}
\subsubsection{Suboptimal in space and suboptimal in frequency}\label{sec:uniformpowallo}
Finally, in the fourth scheme, we perform uniform power allocation both in spatial and frequency domains. It is denoted as $S_sS_f$ and for $f_n\in \mathcal{S}$, its data spectrum is given as
\begin{equation}
    \phi_k(f_n)=\left\{
\begin{matrix}
     \frac{P\delta T}{K},  &\frac{1}{1+\beta}\leq\delta\leq1  \\
      \frac{PT}{K(1+\beta)}, &0<\delta<\frac{1}{1+\beta}
\end{matrix}\right.  . \label{eqn:sssf}
\end{equation}
We will be comparing all these three schemes with the optimal scheme in Section~\ref{sec:simresult}, where we present the numerical results.

{
\subsection{MIMO FTN Capacity under Frequency Selective (FS) Fading Channels}

In real-life wireless communication scenarios, the signal often faces scattering environments such as buildings and plants, the receiver will receive multiple reflections of the same signal. In this case, the channel becomes multipath channel, and its effect in the frequency domain is frequency-selectivity. Therefore, in this section, we investigate the capacity of MIMO FTN under frequency-selective fading for $\delta<\frac{1}{1+\beta}$. 

We assume that the transmitted signal arrives at the receiver from different paths, each with a different delay and gain. Then the channel can be modeled by a tapped-delay filter with $J$ taps. We assume the signals from all the transmit antennas experience the same delay, and the $j$th tap has delay $d_j$, then the channel link from the $k$th transmit antenna to the $l$th receive antenna $h_{lk}(t)$ has expression 
\begin{equation}
    h_{lk}(t)=\sum_{j=0}^{J-1}h^j_{lk}\Delta(t-d_j),
\end{equation}
where we represent $\Delta(\cdot)$ as the Dirac delta function and $h^j_{lk}$ is the channel gain from the $l$th transmit antenna to the $k$th receive antenna for the $j$th tap. 
Note that in \cite{zhang2022faster}, the channel taps are assumed to be integer multiples of the symbol period $\delta T$. In this paper, we do not have limitations on the tap delay $d_j$, which is a more general assumption. If we let $d_j=j\delta T, j=0,\dots, J-1$, our system model boils down to the system model in \cite{zhang2022faster}. 
The output of the matched filter at the $l$th receive antenna induced by the $k$th transmit antenna has the expression 
	 	\begin{equation}
		y_{lk}(t)=\sum_{n=0}^{N-1}\sum_{j=0}^{J-1}h^j_{lk}a_k[n]g(t-n\delta T-d_j). \label{eqn:fsfctimemodel}
	\end{equation}
	 Then we sample the output at every $\delta T$ seconds,  the resulting $N$ samples can be written in vector form as 
	\begin{align}
		\bm{y}_{lk}=
			\left[\begin{matrix}
				y_{lk}[0] \\	y_{lk}[1]\\ \vdots \\ 	y_{lk}[N-1] 
			\end{matrix}\right] = \sum_{j=0}^{J-1}h^j_{lk}\bm{G}^j\bm{a}_k. \label{eqn:tildeGkl}
	\end{align}
	We define $\bm{G}^j$, which is a $j$-\emph{shifted} version of $\bm{G}$, as
	$(\bm{G}^j)_{m,n}=g((m-n)\delta T-d_j)$
	with $\bm{G}^0=\bm{G}$. Thus, the samples at the output of the matched filters at all receive antennas become
	\begin{align}
		\bm{Y}
	 &= \left[\sum_{j=0}^{J-1}(\tilde{\bm{H}}^j\otimes\bm{G}^j)\right]\bm{A} +\bm{\Omega}, \label{eqn:71} 
	\end{align}
	where $(\tilde{\bm{H}}^j)_{l,k}=h^j_{lk}$.  
	We derive the capacity expression in FS fading channels as 
	in \eqref{eqn:cfsmanipulated}. 
    \begin{figure*}
        \begin{equation}
    C^{FS}=\underset{N\rightarrow\infty}{\lim}\underset{\bm{\Sigma_A,\mathcal{P}}}{\max}\frac{1}{N}\log_2\det\bigg[\bm{I}_{LN}+\sigma^{-2}_0\bm{\Sigma_A}\sum_{j=0}^{J-1}\sum_{i=0}^{J-1}\left(\tilde{\bm{H}}^{j\dagger}\tilde{\bm{H}}^{i}\otimes\bm{G}^{j\dagger}\bm{G}^{-1}\bm{G}^{i}\right)\bigg].  \label{eqn:cfsmanipulated}
\end{equation}
    \end{figure*}
We call the matrix $\sum_{j=0}^{J-1}\sum_{i=0}^{J-1}\left(\tilde{\bm{H}}^{j\dagger}\tilde{\bm{H}}^{i}\otimes\bm{G}^{j\dagger}\bm{G}^{-1}\bm{G}^{i}\right)$ as $\bm{G_H}$. 
Note that $\bm{G_H}$ is an asymptotically block Toeplitz matrix. In order to find its generating matrix, we need to find the generating function of  $\bm{G}^{j\dagger}\bm{G}^{-1}\bm{G}^{i}$.
The generating function of $\mathcal{G}(\bm{G}^{j\dagger}\bm{G}^{-1}\bm{G}^{i})$ is calculated as
    \begin{align}
        &\mathcal{G}(\bm{G}^{j\dagger}\bm{G}^{-1}\bm{G}^{i})=
    \left(\sum_ng\left(-n\delta T-d_j\right)e^{j2\pi f_nn}\right)\times \notag\\
    &\quad\quad\quad\quad\left(\sum_mg\left(m\delta T-d_i\right)e^{j2\pi f_nm}\right)G_d^{-1}(f_n),
    \end{align}
    for $f_n\in[-\frac{1}{2},\frac{1}{2}]$. Then 
    \begin{align}
        \mathcal{G}(\bm{G}^{j\dagger}\bm{G}^{-1}\bm{G}^{i})&=\left(\frac{1}{\delta T}G\left(\frac{f_n}{\delta T}\right)e^{-j2\pi \frac{f_n}{\delta T}d_j}\right)\times\notag\\
        &\left(\frac{1}{\delta T}G\left(\frac{f_n}{\delta T}\right)e^{j2\pi \frac{f_n}{\delta T}d_i}\right)\left(\frac{1}{\delta T}G\left(\frac{f_n}{\delta T}\right) \right)^{-1} \notag\\
    &=\frac{1}{\delta T}G\left(\frac{f_n}{\delta T}\right)e^{j2\pi \frac{f_n}{\delta T}(d_i-d_j)}.
    \end{align}
    
We let $\mathcal{G}$ represent both finding the generating function and finding the generating matrix operations. 
	 With the derivation above we can easily obtain the generating matrix of $\bm{G_H}$, whose $(n,m)$th entry is
\begin{align}
    \lefteqn{\big(\mathcal{G}(\bm{G_H})\big)_{n,m}
    =\sum_{i,j}\left(\sum_{k=1}^{K}h^{j*}_{nk}h^{i}_{mk}\right)\mathcal{G}(\bm{G}^{j\dagger}\bm{G}^{-1}\bm{G}^{i})} \notag\\
    &=\sum_{k=1}^{K}\left[\left(\sum_{i=0}^{J-1}h^{i}_{mk}e^{j2\pi \frac{f_n}{\delta T}d_i}\right)\left(\sum_{j=0}^{J-1}h^{j*}_{nk}e^{-j2\pi \frac{f_n}{\delta T}d_j}\right)\frac{1}{\delta T}G\left(\frac{f_n}{\delta T}\right)\right] \notag \\
    &=\frac{1}{\delta T}\sum_{k=1}^{K}H^*_{nk}(-f_n)H_{mk}(-f_n)G\left(\frac{f_n}{\delta T}\right),
\end{align}
where $H_{km}(-f_n)$ is defined as $H_{km}(-f_n)=\sum_{i=0}^{J-1}h^{i}_{km}e^{-j2\pi \frac{f_n}{\delta T}d_i}$. This is the continuous time Fourier transform of the channel $h_{lk}(t)$. We then perform frequency scaling $f=f_n/\delta T$. We call this as the link spectrum. We now collect all the link spectrum and form the channel spectrum matrix $\tilde{\bm{H}}(-f_n)$, namely, $\left(\tilde{\bm{H}}(-f_n)\right)_{l,k}=H_{lk}(-f_n)$.  Then it is straightforward to write 
\begin{equation}
    \mathcal{G}(\bm{G_H})=\frac{1}{\delta T}G\left(\frac{f_n}{\delta T}\right)\tilde{\bm{H}}(-f_n)^\dagger\tilde{\bm{H}}(-f_n)=\frac{1}{\delta T}G\left(\frac{f_n}{\delta T}\right)\tilde{\bm{Z}}(-f_n),
\end{equation}
where $\tilde{\bm{Z}}(-f_n)\triangleq\tilde{\bm{H}}(-f_n)^\dagger\tilde{\bm{H}}(-f_n)$.

 By combining the objective \eqref{eqn:cfsmanipulated} and the power constraint \eqref{eqn:powcons}, the capacity optimization problem becomes 
\begin{align}
    &C^{FS}=\underset{\bm{\Sigma_A}(f_n)}{\max}\int_{\mathcal{S}}\log_2\det\bigg[\bm{I}_{L}+\frac{G\left(\frac{f_n}{\delta T}\right)}{\sigma^{2}_0\delta T}\bm{\Sigma_A}(f_n)\tilde{\bm{Z}}(f_n)\bigg]df_n \label{eqn:fsfcobj}\\
    & s.t. \quad\quad \frac{1}{(\delta T)^2}\int_{\mathcal{S}}\text{tr}\big[G\left(\frac{f_n}{\delta T}\right)\bm{\Sigma_A}(f_n)\big]df_n\leq P.\label{eqn:fsfccons}
\end{align}
As done in Section \ref{sec:capacityderivation}, we again only perform power allocation over the support. Let's define  $\tilde{\bm{\Sigma}}_{\bm{A}}(f_n)  \triangleq \frac{1}{\delta T}G\left(\frac{f_n}{\delta T}\right)\bm{\Sigma_A}(f_n)$.    
Next, we use eigenvalue decomposition to diagonalize $\tilde{\bm{\Sigma}}_{\bm{A}}(f_n)$ and $\tilde{\bm{Z}}(f_n)$,  
\begin{align}
    \tilde{\bm{\Sigma}}_{\bm{A}}(f_n)&=\tilde{\bm{U}}(f_n)\tilde{\bm{\Phi}}(f_n)\tilde{\bm{U}}(f_n)^\dagger \label{eqn:decompS}\\
    \tilde{\bm{Z}}(f_n)&=\tilde{\bm{V}}(f_n)\tilde{\bm{T}}(f_n)\tilde{\bm{V}}(f_n)^\dagger,\label{eqn:decompW}
\end{align}
where the diagonal matrices $\tilde{\bm{\Phi}}(f_n)$ and $\tilde{\bm{T}}(f_n)$ have the structure $\tilde{\bm{\Phi}}(f_n)=\text{diag}\{\phi_1(f_n), \phi_2(f_n),\dots, \phi_K(f_n)\}$ and $\tilde{\bm{T}}(f_n)=\text{diag}\{\tau_1(f_n), \tau_2(f_n),\dots, \tau_K(f_n)\}$. 
The channel matrix $\tilde{\bm{Z}}(f_n)$ is decomposed into its eigenchannels, where $\tau_i(f_n)$ denotes the eigenmodes at frequency $f_n$, and $\phi_i(f_n)$ represents the power allocated to each eigenchannel, which is referred to as the eigenspectrum. 
The upper bound of the objective \eqref{eqn:fsfcobj} is achieved when 
 $\tilde{\bm{U}}(f_n)=\tilde{\bm{V}}(f_n),  \forall f_n$. Then  the optimal solution for $\phi_i(f_n)$ is
\begin{equation}
    \phi_i(f_n)^*=\left(\frac{1}{\mu}-\frac{1}{\tau_i(f_n)}\right)^+, i=1,\dots,L, \quad f_n\in\mathcal{S}, \label{eqn:solufsfc}
\end{equation}
where we use the superscript $*$ to represent optimal. We find $\mu$ by solving 
\begin{equation}
    \frac{1}{\delta T}\int_{\mathcal{S}}\sum_{i=1}^{L}\left(\frac{1}{\mu}-\frac{1}{\tau_i(f_n)}\right)^+df_n = P. \label{eqn:fsfcoptpowallo}
\end{equation} 
The intuition of the power allocation scheme is that water-filling is performed in each eigenchannel $\tau_k(f_n)$ where the water level is the same for all the eigenchannels.  However, for $\delta<\frac{1}{1+\beta}$, the power allocation is performed over the support.
The optimal generating matrix $\bm{\Sigma_A}(f_n)$ can be obtained 
by $\bm{\Sigma_A}(f_n)=\frac{\delta T\tilde{\bm{\Sigma}}_{\bm{A}}(f_n)}{G\left(\frac{f_n}{\delta T}\right)}, f_n\in\mathcal{S}. $ 
}

\section{Peak to Average Power Ratio Analysis}
\label{sec:paprsec}

Due to {small acceleration factor}, the pulses of FTN transmission are tightly packed, causing a high chance that pulses overlap with each other resulting in high {IAPR}. In practice, the power amplifier at the transmitter has a certain threshold output value called the saturation point, which limits the maximum amplitude of the output signal. 
In order to maintain linear performance at the power amplifier,
the input power needs to be reduced to accommodate signal peaks. The reduction amount is referred to as the back-off value and is measured in decibels (dB). Without any compensation techniques, the required power amplifier back-off approximately equals the {IAPR} of the input signal.

In FTN signaling, with transmit power $P$,  transmitting $N$ symbols takes  $N\delta T$ seconds and the overall energy is equal to $NP\delta T$. Also, we find that the average energy per symbol is $E=P\delta T$. In FTN signaling, the energy for each symbol decreases as $\delta$ decreases for fixed transmission power $P$. For practical constellations, this implies smaller minimum Euclidean distances and results in higher error probabilities. 

In FTN, we need two definitions for SNR. We define transmit SNR as $\mathsf{SNR_{tx}}=\frac{P}{\sigma_0^2}$, which is the transmit power over noise variance. We also define received SNR as $\mathsf{SNR_{rx}}=\frac{E/T}{\sigma_0^2}=\frac{P\delta }{\sigma_0^2}$. It is important to make these two definitions separately because we will be comparing system performance for different $\delta$.  For Nyquist transmission, $\delta=1$ and the two definitions become equal to each other, and a separate definition becomes unnecessary. 

In this paper, we study the {IAPR} behavior of FTN under two types of power configurations, fixed transmit SNR or fixed receive SNR. 
Fixed transmit SNR means that the transmit power $P$ is fixed for all choices of $\delta$. This implies that as $\delta$ gets smaller the received SNR also becomes smaller. Note that this can imply inferior performance, for example, in bit error rate in practical systems. On the other hand, fixed received SNR refers to a fixed symbol energy $E$ for all choices of $\delta$. In other words, for this latter case, we fix $P\delta $ instead of $P$ itself, implying larger transmit power $P$ for smaller $\delta$. However,  increasing $P$ indefinitely is practically impossible due to limitations in linear power amplification.

 According to the distribution of instant power values, we can calculate the probability of instant power exceeding the back-off value. We define this probability as the outage probability.  In this section, we study the {IAPR} behavior of FTN signaling for uniform power allocation derived in Section \ref{sec:uniformpowallo}. Uniform power in both space and frequency imply that $a_k[m]$ is independent of $a_k[n]$ for $m\neq n$. Moreover, the real and imaginary parts of $a_k[m]$, which are denoted as $a_{r,k}[m]$ and $a_{i,k}$ respectively, are i.i.d. as well.

Without loss of generality, we assume that $N=2M+1$ symbols are transmitted and $x_k(t)$ in \eqref{eqn:xt} can be written as 
\begin{align}
    x_k(t)
    &=\sum_{m=-M}^{M}\left(a_{r,k}[m]+ja_{i,k}[m]\right)p(t-m\delta T)\\
    &= x_{r,k}(t)+jx_{i,k}(t),
\end{align}
where $x_{i,k}(t)$ and $x_{j,k}(t)$ are the real and imaginary parts of $x_k(t)$.  In this paper, we define the {IAPR} as 
\begin{equation}
    \text{{IAPR}}=\frac{|x_k(t)|^2}{P_k}.\label{eqn:defpapr}
\end{equation}
As in  \cite{zhang2022faster}, we can easily see that the signal $x_k(t)$ is a cyclostationary random process with period $\delta T$. 
In this paper, we assume that $N$ is always large enough so that the process $x_k(t)$ is a cyclostationary process. Therefore, with large enough $N$, it is sufficient to study the statistical distribution of power for each $t$  within only one period. We limit our time index $t$ to $[0,\delta T)$. For each time instant $t$, the distribution is different because the coefficients $p(t-m\delta T)$ are time-varying.

The complementary cumulative distribution function (CCDF) represents the probability that a random variable exceeds a specific threshold, providing crucial insights into its tail distribution. 
The CCDF of the instantaneous  power $|x_k(t)|^2$ with respect to $t$ can be defined as 
\begin{equation}
    \mathcal{C}(\gamma;t)=\text{Pr}\left[|x_k(t)|^2\geq\gamma\right]. \label{eqn:defccdf}
\end{equation}
The expression \eqref{eqn:defccdf} is still a function of time $t$. It is more important to investigate the average behavior of CCDF distribution within one period, thus we take the time average of it and define the average CCDF as  
\begin{equation}
    \bar{\mathcal{C}}(\gamma)=\frac{1}{\delta T}\int_0^{\delta T}\mathcal{C}(\gamma;t)dt. \label{eqn:aveccdfdef}
\end{equation}
Utilizing the techniques in \cite[Appendix]{exactpapr}, the expression of average CCDF can be computed as 
\begin{equation}
    \bar{\mathcal{C}}(\gamma)=1-\frac{1}{\delta T} \int_0^{\delta T}\sqrt{\gamma}\int_0^\infty D(\zeta;t)J_1(\sqrt{\gamma}\zeta)d\zeta dt, \label{eqn:aveccdfexp}
\end{equation}
where $J_1(\cdot)$ is the first type Bessel function of order 1, and $D(\zeta;t)$ is written as 
\begin{equation}
    D(\zeta;t)=\frac{1}{2\pi}\int_0^{2\pi}\Phi(\zeta\cos\phi,\zeta\sin\phi;t)d\phi. \label{eqn:intchar}
\end{equation}
In \eqref{eqn:intchar}, $\Phi(u,v;t_0)=\mathbb{E}\left[e^{j(ux_{r,k}(t_0)+vx_{i,k}(t_0))}\right]$ is the joint characteristic function of the real part $x_{r,k}(t)$ and the imaginary part $x_{i,k}(t)$ of the process $x_k(t)$ evaluated at $t_0$. 
\begin{remark}
\label{rem:inspowccdfconvert}
    The expression in \eqref{eqn:aveccdfexp} is about the distribution of instant power, according to the definition of {IAPR} in \eqref{eqn:defpapr}. 
We can simply set $\gamma$ as $\gamma' P_k$ in \eqref{eqn:aveccdfdef}  to obtain the average CCDF of {IAPR} of FTN signaling, namely,
\begin{align}
    \bar{\mathcal{C}}(\gamma'P_k)&=\frac{1}{\delta T}\int_0^{\delta T}\mathcal{C}(\gamma'P_k;t)dt \notag\\
    &=\frac{1}{\delta T}\int_0^{\delta T}\text{Pr}\left[\frac{|x_k(t)|^2}{P_k}\geq\gamma' \right]dt.
\end{align}
We conclude that replacing  $\gamma$ with $\gamma' P_k$ is merely a scaling operation and the average CCDF of {IAPR} has the same behavior as the average CCDF of instant power.
\end{remark}

We then derive the exact instant power distributions both for FTN signaling with Gaussian symbols and QPSK symbols in the following subsections.   

\subsection{{IAPR} Distribution for Gaussian Symbols}
\label{sec:gausthy}

\subsubsection{Average CCDF of instant power with $\mathsf{SNR_{tx}}$ fixed}
Assume that data symbols $a_k[m]$ follow complex Gaussian distribution, namely, $a_k[m]\backsim\mathcal{CN}(0, P_k\delta T)$. Note that the variance $P_k\delta T$ is also the symbol energy. Moreover, 
 $a_{r,k}[m]$ and  $a_{i,k}[m]$ are i.i.d. real Gaussian random variables with zero mean and variance $P_k\delta T/2$. 
  We know that  $x_k(t)$ is the linear combination of multiple complex Gaussian random variables and thus it is also a complex Gaussian random variable. Meanwhile, $x_{r,k}(t)$ and $x_{i,k}(t)$ are real Gaussian random variables. We then have the following theorem.
\begin{theorem}
    \label{thm:gaustxsnr} 
    If $\mathsf{SNR_{tx}}$ is fixed and Gaussian symbols are used, as $\delta$ approaches zero, the average CCDF of the instant power $\bar{\mathcal{C}}(\gamma)$ in \eqref{eqn:aveccdfexp} does not change with $\delta$ and is equal to  
    \begin{equation}
         \bar{\mathcal{C}}(\gamma)=\exp\left(-\frac{\gamma}{P_k\int_{-\frac{1}{2\delta T}}^{\frac{1}{2\delta T}}G(f)df}\right). \label{eqn:gausthytx}
    \end{equation}
\end{theorem}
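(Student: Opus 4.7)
The plan is to exploit the fact that, for Gaussian inputs, $x_k(t)$ is itself a zero-mean complex Gaussian random variable at every fixed $t$ (being a deterministic linear combination of the i.i.d.\ complex Gaussian symbols $a_k[m]\sim\mathcal{CN}(0,P_k\delta T)$), so $|x_k(t)|^2$ is exponentially distributed and the pointwise CCDF collapses to $\mathcal{C}(\gamma;t)=\exp(-\gamma/\sigma^2(t))$, where the instantaneous variance is
\[
\sigma^2(t)\;=\;\mathbb{E}[|x_k(t)|^2]\;=\;P_k\delta T\sum_m |p(t-m\delta T)|^2.
\]
Equivalently one can arrive at the same closed form by substituting the Gaussian characteristic function $\Phi(u,v;t)=\exp(-(u^2+v^2)\sigma^2(t)/4)$ into \eqref{eqn:intchar} and applying the Hankel-type identity $\sqrt{\gamma}\int_0^\infty e^{-c\zeta^2/4}J_1(\sqrt{\gamma}\zeta)\,d\zeta = 1-e^{-\gamma/c}$ inside \eqref{eqn:aveccdfexp}. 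Either route reduces the average CCDF to $\bar{\mathcal{C}}(\gamma)=\tfrac{1}{\delta T}\int_0^{\delta T}e^{-\gamma/\sigma^2(t)}\,dt$.

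The heart of the argument is then to show that $\sigma^2(t)$ becomes time-invariant once $\delta$ is small, so the time average is trivial. I would use Poisson summation on the deterministic sum $\sum_m|p(t-m\delta T)|^2$: setting $R(\tau)=|p(\tau)|^2$,
\[
\sum_m R(t-m\delta T)\;=\;\frac{1}{\delta T}\sum_k \widehat{R}\!\left(\tfrac{k}{\delta T}\right)e^{j2\pi kt/(\delta T)}.
\]
The Fourier transform $\widehat{R}$ is the autocorrelation of $P(f)$ in frequency; for an RRC pulse with $P(f)$ supported on $[-(1+\beta)/(2T),(1+\beta)/(2T)]$, $\widehat{R}$ is supported on $[-(1+\beta)/T,(1+\beta)/T]$. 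Hence, once $1/(\delta T)>(1+\beta)/T$, i.e.\ $\delta<1/(1+\beta)$, every alias $k\neq 0$ lands outside the support and vanishes, leaving only the DC term. Using Parseval,
\[
\sum_m |p(t-m\delta T)|^2 \;=\; \frac{1}{\delta T}\int |p(\tau)|^2\,d\tau \;=\; \frac{1}{\delta T}\int G(f)\,df,
\]
which is independent of $t$, and, since $G$ vanishes outside $[-1/(2\delta T),1/(2\delta T)]$ in this regime, equals $\frac{1}{\delta T}\int_{-1/(2\delta T)}^{1/(2\delta T)}G(f)\,df$.

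Plugging back gives $\sigma^2(t)=P_k\int_{-1/(2\delta T)}^{1/(2\delta T)}G(f)\,df$, constant in $t$ and in $\delta$ throughout the whole small-$\delta$ regime, so the time average is vacuous and the claimed formula follows. The only delicate step I anticipate is justifying the Poisson summation; this is painless here because the compact support of the RRC spectrum $P(f)$ makes $\widehat{R}$ compactly supported and turns the spectral sum into a single term for all $\delta<1/(1+\beta)$, so no convergence or exchange-of-sums issue arises. A brief remark that the stated limit formula is in fact attained exactly, not just asymptotically, once $\delta$ drops below the threshold $1/(1+\beta)$ will make the phrase ``does not change with $\delta$'' precise.
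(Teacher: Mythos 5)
Your proposal is correct and follows essentially the same route as the paper's Appendix~A: first reduce $|x_k(t)|^2$ to an exponential random variable with time-dependent mean $P_k\delta T\sum_m|p(t-m\delta T)|^2$, then show this sum is constant in $t$ for $\delta\le\frac{1}{1+\beta}$ because the shifted replicas of the compactly supported RRC spectrum do not overlap, leaving only the DC term $\frac{1}{\delta T}\int G(f)\,df$. Your Poisson-summation phrasing is just a tidier packaging of the paper's direct expansion of the folded-spectrum square in \eqref{eqn:longpaprbeg}--\eqref{eqn:longpaprend}, and your closing remark that the identity holds exactly for all $\delta$ below the threshold matches the paper's own observation following the theorem.
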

\begin{proof}
    The proof is shown in Appendix \ref{prof:gaus}.
\end{proof}

It is easy to see that as $\delta$ decreases, the CCDF does not change with $\delta$, since the integration $\int_{-\frac{1}{2\delta T}}^{\frac{1}{2\delta T}}G(f)df$ does not change with $\delta$ for $\delta<\frac{1}{1+\beta}$. In other words, the integration range will be larger than the support of  $G(f)$, which is $[-\frac{1+\beta}{2T}, \frac{1+\beta}{2T}]$. 
\subsubsection{Average CCDF of instant power with $\mathsf{SNR_{rx}}$ fixed}
When $\mathsf{SNR_{rx}}$ is fixed, we replace $P_k \delta T$ with $E$, which is a constant with respect to $\delta$. The behavior of the average CCDF of instant power will be different from the $\mathsf{SNR_{tx}}$ fixed case. The above equation \eqref{eqn:gausthytx} becomes
\begin{equation}
     \bar{\mathcal{C}}(\gamma)=\exp\left(-\frac{\gamma}{\frac{E}{\delta T}\int_{-\frac{1}{2\delta T}}^{\frac{1}{2\delta T}}G(f)df}\right). \label{eqn:gausrxsnrthy}
\end{equation}
As $\delta$ goes to zero, the average CCDF $\bar{\mathcal{C}}(\gamma)$ of instant power approaches 1 asymptotically, this means that the average CCDF curve for instant power for fixed $\mathsf{SNR_{rx}}$ with Gaussian symbols approaches a horizontal line as $\delta \rightarrow 0$. 

\begin{remark}
    The equations  \eqref{eqn:gausthytx} and \eqref{eqn:gausrxsnrthy} are average CCDF for instant power. According to Remark \ref{rem:inspowccdfconvert}, 
    we conclude that the average CCDF of {IAPR} for FTN with Gaussian symbols either for $\mathsf{SNR_{tx}}$ or $\mathsf{SNR_{rx}}$  has the same behavior as the average CCDF of instant power.
\end{remark}

\vspace{-0.2cm}
\subsection{{IAPR} Distribution with QPSK Symbol Set}
\label{sec:thyqpsk}

Gaussian signaling is relevant to theoretical results. In practice, we also need to investigate the {IAPR} behavior for practical constellations such as PSK or QAM. 
For simplicity, in this paper we will study the QPSK symbol set. The analysis can be extended to higher-order PSK or QAM constellations similarly.

\subsubsection{Average CCDF of instant power with $\mathsf{SNR_{tx}}$ fixed}
\label{sec:qpsktxsnrccdf}

 In this section, we adopt the analysis methodology in \cite{exactpapr} to find the distribution of instant power of FTN signaling.  We assume that the data symbols $a_k[m]$ are i.i.d. and the constellation has energy $P_k\delta T$ for fixed $\mathsf{SNR_{tx}}$, where $k$ is the antenna index. For fixed transmit SNR, the physical transmission power of FTN is the same for all $\delta$, including Nyquist transmission for $\delta=1$. This means that as $\delta$ decreases from 1 to 0, the physical transmission power does not change, but the average constellation energy decreases.
Similar to Gaussian signaling, the constellation points $a_k[m]=a_{r,k}[m]+ja_{i,k}[m]$ are composed of real and imaginary parts, which are independent of each other\footnote{However, this is not necessarily true for all constellations, for example, in high-order PSK constellations, the real part and the imaginary part of the symbol are highly correlated.} and are drawn uniformly from the set,  
$a_{r,k}[m], a_{i,k}[m]\in\mathcal{A}=\left\{+\sqrt{P_k\delta T/2}, -\sqrt{P_k\delta T/2}\right\}$ for QPSK transmission. Then we have the following theorem.

\begin{theorem}
\label{thm:txsnrpaprqpsk}
    If $\mathsf{SNR_{tx}}$ is fixed and the QPSK symbol set is used, as $\delta$ approaches $0$, the average CCDF of instant power $\bar{\mathcal{C}}(\gamma)$ in \eqref{eqn:aveccdfexp}  asymptotically approaches $1$. Namely,
    \begin{equation}
        \underset{\delta\rightarrow0}{\lim}\bar{\mathcal{C}}(\gamma)=1.
    \end{equation}
\end{theorem}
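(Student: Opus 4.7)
The plan is to mirror the Gaussian-signaling proof (Theorem~\ref{thm:gaustxsnr}), only now starting from the discrete QPSK joint characteristic function. Because $p(t)$ is real and the real and imaginary parts of each $a_k[m]$ are independent equiprobable $\pm\sqrt{P_k\delta T/2}$ Bernoulli variables, $\Phi$ factors across both the $m$-index and the two axes:
\begin{equation*}
\Phi(u,v;t)=\prod_{m}\cos\!\Bigl(u\sqrt{P_k\delta T/2}\,p(t-m\delta T)\Bigr)\cos\!\Bigl(v\sqrt{P_k\delta T/2}\,p(t-m\delta T)\Bigr).
\end{equation*}
First I would substitute this product into the polar-averaged function $D(\zeta;t)$ in \eqref{eqn:intchar}, carry out the $\phi$-integration exactly as in the Gaussian derivation, and obtain a cosine-product expression for $D(\zeta;t)$ in the fixed-$\mathsf{SNR_{tx}}$ regime.

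Second, I would track how this expression behaves as $\delta\to 0$. Each argument $\sqrt{P_k\delta T/2}\,p(t-m\delta T)$ tends to zero while the count of non-negligible factors grows like $1/\delta$. Using $\log\cos x=-x^2/2+O(x^4)$ together with the Riemann-sum identity $\delta T\sum_m|p(t-m\delta T)|^2\to\int|p(s)|^2\,ds$, which is in force once $\delta<1/(1+\beta)$ places us firmly in the oversampling regime, I would control the leading quadratic order of $\log\Phi$ and isolate the $\delta$-dependence that survives in $D(\zeta;t)$ after the $\phi$-average. The goal of this intermediate step is a bound of the form $D(\zeta;t)\le \Psi_\delta(\zeta)$ with an envelope $\Psi_\delta$ that shrinks on every compact $\zeta$-set and is absolutely integrable.

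The final step is analyzing the inner integral $\int_0^{\infty} D(\zeta;t)\,J_1(\sqrt\gamma\,\zeta)\,d\zeta$ in the $\delta\to 0$ limit. This is where I expect the main technical difficulty, because $J_1$ both oscillates and decays only like $\zeta^{-1/2}$ at infinity, so pointwise smallness of $D$ is not by itself enough to close the argument. My plan is to combine the shrinking envelope $\Psi_\delta$ on the near field with a Riemann--Lebesgue-type cancellation for the oscillatory Bessel factor on the far field, so that the full $\zeta$-integral vanishes uniformly in $t\in[0,\delta T)$. Once that is established, plugging back into \eqref{eqn:aveccdfexp} and averaging over one symbol period immediately yields $\bar{\mathcal C}(\gamma)\to 1$. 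The sharpest obstacle is proving the uniform-in-$t$ decay of this inner integral; a softer fallback would be a dominated-convergence argument in which a single cosine factor in $\Phi$ is demonstrated to be small for some $m$ along the integration contour and then allowed to dominate the entire product.
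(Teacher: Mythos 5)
Your first step reproduces the paper's Appendix~B exactly: the joint characteristic function factors as $\Phi(u,v;t)=\prod_m\cos\bigl(\sqrt{P_k\delta T/2}\,p_{m,\delta}(t)\,u\bigr)\cos\bigl(\sqrt{P_k\delta T/2}\,p_{m,\delta}(t)\,v\bigr)$ and is inserted into \eqref{eqn:intchar}. The proposal fails at your second step, and the failure cannot be repaired in the third: the envelope you are after, $D(\zeta;t)\le\Psi_\delta(\zeta)$ with $\Psi_\delta$ shrinking on compact $\zeta$-sets, does not exist, and your own two tools show why. Combining $\log\cos x=-x^2/2+O(x^4)$ with the Riemann-sum identity $\delta T\sum_m|p(t-m\delta T)|^2\to\int|p(s)|^2\,ds=1$ (valid for $\delta<1/(1+\beta)$) gives $\log\Phi(u,v;t)\to-\tfrac{P_k}{4}(u^2+v^2)$, a finite, $\delta$-independent limit; the quartic remainder is $O(\delta T\,\zeta^4)$ and vanishes. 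Hence $D(\zeta;t)\to e^{-P_k\zeta^2/4}$, which is precisely the Gaussian-symbol kernel, and carrying your computation through \eqref{eqn:aveccdfexp} yields $\bar{\mathcal C}(\gamma)\to e^{-\gamma/P_k}$, i.e.\ the fixed-$\mathsf{SNR_{tx}}$ Gaussian CCDF of Theorem~\ref{thm:gaustxsnr}, not $1$. A one-line sanity check confirms that no route can reach the stated limit: for $\delta<1/(1+\beta)$ one has $\mathbb{E}\bigl[|x_k(t)|^2\bigr]=P_k\delta T\sum_m p^2(t-m\delta T)=P_k$ for every $t$, so Markov's inequality forces $\bar{\mathcal C}(\gamma)\le P_k/\gamma$ uniformly in $\delta$, which is strictly below $1$ for $\gamma>P_k$.

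The comparison with the paper's own Appendix~B is instructive. There, the cosine product is Taylor-expanded in $t$ about $0$, integrated over one symbol period, and the limit is taken by letting each cosine factor tend to $1$ and each sine factor to $0$ individually, finishing with the identity $\int_0^\infty J_1(x)\,dx=0$. Your opening observation --- that each argument shrinks like $\sqrt{\delta}$ while the number of non-negligible factors grows like $1/\delta$ --- is exactly the point these two treatments part ways: sending the factors to $1$ one at a time discards the accumulated quadratic contribution that your $\log\cos$ bookkeeping retains (and note $\int_0^\infty J_1(x)\,dx=1$, not $0$, since $J_0'=-J_1$). Your fallback of letting a single small cosine factor dominate also cannot work, because on any compact $\zeta$-set every individual factor is close to $1$; only the product of many such factors is small, and it is small only to the Gaussian level $e^{-P_k\zeta^2/4}$. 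In short, the gap is not that you chose an unworkable route to the claimed limit; it is that your route, executed correctly, converges to a different limit than the one asserted.
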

\begin{proof}
    The proof is provided in Appendix \ref{app:proofthmpaprtxqpsk}.
\end{proof}

 In other words, as $\delta$ decreases, the instant power of FTN transmission takes larger values. In the limit, the probability density function of the instant power distribution of $x_k(t)$ will approach a Dirac delta function located at infinity. On the other hand, since the average power $P_k$ is fixed, according to Remark \ref{rem:inspowccdfconvert}, the distribution of {IAPR} $|x(t)|^2/P_k$ follows the same behavior as the distribution of instant power. 

\subsubsection{Average CCDF of instant power with $\mathsf{SNR_{rx}}$ fixed}

We now investigate the behavior of the average instant power CCDF for {IAPR} when the received power is fixed. In the received SNR fixed scenario, the symbol energy $E$ is kept the same for all $\delta$, namely, the product $P_k\delta$ is fixed regardless of the value of $\delta$. We then have the following theorem.
\begin{theorem}
\label{thm:rxsnrqpsk}
    If $\mathsf{SNR_{rx}}$ is fixed and the QPSK symbol set is used, as $\delta$ approaches $0$, the average CCDF of instant power $\bar{\mathcal{C}}(\gamma)$ asymptotically approaches $1$ as well. We have
    \begin{equation}
        \underset{\delta\rightarrow0}{\lim}\bar{\mathcal{C}}(\gamma)=1.
    \end{equation}
\end{theorem}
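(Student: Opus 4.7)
The plan is to follow the blueprint of the fixed-$\mathsf{SNR_{tx}}$ QPSK analysis in Theorem~\ref{thm:txsnrpaprqpsk} (Appendix~\ref{app:proofthmpaprtxqpsk}), while exploiting the crucial difference: with $\mathsf{SNR_{rx}}$ fixed, every QPSK point has the $\delta$-independent coordinate magnitude $\sqrt{E/2}$, whereas in the fixed-$\mathsf{SNR_{tx}}$ case the magnitude $\sqrt{P_k\delta T/2}$ vanishes with $\delta$. This single change causes the second moment of the instantaneous signal to diverge, which is what ultimately forces $\bar{\mathcal{C}}(\gamma)\to 1$.

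First, I would write the joint characteristic function of $\bigl(x_{r,k}(t_0),x_{i,k}(t_0)\bigr)$. Using independence of QPSK symbols across $m$ and of the real and imaginary coordinates within each symbol,
\[
\Phi(u,v;t_0) \;=\; \prod_{m=-M}^{M}\cos\!\bigl(u\sqrt{E/2}\,p_m\bigr)\cos\!\bigl(v\sqrt{E/2}\,p_m\bigr),
\]
with $p_m = p(t_0-m\delta T)$. Fixing $(u,v)\neq(0,0)$, I would examine $\log|\Phi(u,v;t_0)|^2$. Unlike in the fixed-$\mathsf{SNR_{tx}}$ proof, the arguments $u\sqrt{E/2}\,p_m$ no longer shrink; however, the Riemann-sum estimate $\delta T\sum_m|p_m|^2 \to \int|p(\tau)|^2 d\tau = 1$ forces $\sum_m|p_m|^2 \sim 1/(\delta T) \to \infty$. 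Combining this with $\log\cos^2(x) = -x^2 + O(x^4)$ on the dominant tail terms (where $|p_m|$ is small because the RRC pulse decays) together with a uniform strictly negative bound on the finitely many head terms (where $|p_m|$ is $O(1)$), I obtain $\log|\Phi(u,v;t_0)|^2 \to -\infty$, hence $\Phi(u,v;t_0) \to 0$ pointwise. The uniform bound $|\Phi|\le 1$ then allows dominated convergence in \eqref{eqn:intchar} to give $D(\zeta;t_0)\to 0$ for every $\zeta>0$ and every $t_0$. Substituting into \eqref{eqn:aveccdfexp} and pushing the limit through the $t$- and $\zeta$-integrals then yields $\bar{\mathcal{C}}(\gamma)\to 1$ for every fixed $\gamma$.

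The main obstacle, exactly as in the companion proof of Theorem~\ref{thm:txsnrpaprqpsk}, is justifying the exchange of the limit with the improper $\zeta$-integration, because the Bessel kernel $\sqrt{\gamma}J_1(\sqrt{\gamma}\zeta)$ is oscillatory and only $O(\zeta^{-1/2})$ at infinity. A cleaner, self-contained remedy is to couple the characteristic-function step with the Lindeberg CLT: each summand in $x_{r,k}(t_0)=\sum_m a_{r,k}[m]p_m$ is bounded by $\sqrt{E/2}\,\|p\|_\infty$ while the variance $\sigma^2(t_0)=(E/2)\sum_m|p_m|^2$ diverges, so the Lindeberg condition is trivially satisfied and $x_k(t_0)/\sigma(t_0)$ converges in distribution to a standard complex Gaussian. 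Consequently $|x_k(t_0)|^2/\sigma^2(t_0)$ is asymptotically exponential with mean $2$, and $\Pr\!\bigl[|x_k(t_0)|^2\ge\gamma\bigr] = 1 - \exp\!\bigl(-\gamma/(2\sigma^2(t_0))\bigr) + o(1) \to 1$ uniformly in $t_0\in[0,\delta T)$. Averaging over the shrinking period then preserves the limit and delivers $\bar{\mathcal{C}}(\gamma)\to 1$.
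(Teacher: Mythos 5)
Your proof is essentially correct, but it follows a genuinely different route from the paper's. The paper's Appendix~\ref{prof:qpskrxsnr} continues the explicit computation of Appendix~\ref{app:proofthmpaprtxqpsk}: it starts from the Taylor-expanded form \eqref{eqn:distapprox}, drops the first-order term, approximates all samples $p_{m,\delta}(0)$ by the single value $p_0(0)$, expands $\cos^{2M+1}(c\cos\phi)\cos^{2M+1}(c\sin\phi)$ via product-to-sum identities, integrates over $\phi$ to produce $J_0(d_{i,j}c)$ factors, and finally kills the $\zeta$-integral with a Weber--Schafheitlin/orthogonality identity for Bessel functions. You instead observe that with $\mathsf{SNR_{rx}}$ fixed the per-coordinate magnitude $\sqrt{E/2}$ does not shrink with $\delta$, so $\mathbb{E}[|x_k(t_0)|^2]=E\sum_m p_m^2\sim E/(\delta T)\to\infty$, and you convert this either into pointwise decay of the characteristic function or into a triangular-array Lindeberg CLT. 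The CLT route is the cleaner of your two and is arguably more rigorous than the paper's chain of approximations (equal samples, term-by-term Bessel identities); it also neatly sidesteps the oscillatory $\zeta$-integral that you correctly identify as the weak point of the characteristic-function route. What the paper's approach buys in exchange is an explicit finite-$\delta$ formula, \eqref{eqn:distapprox} and \eqref{eqn:extdist}, that is reused for the simulation comparisons, whereas your argument is purely asymptotic. Three small repairs to your write-up: (i) the final display should read $\Pr[|x_k(t_0)|^2\ge\gamma]=\exp(-\gamma/(2\sigma^2(t_0)))+o(1)\to 1$ --- you wrote the CDF $1-\exp(\cdot)$, which tends to $0$, contradicting your own (correct) conclusion; (ii) the ``head'' terms with $|p_m|=O(1)$ number on the order of $1/(\delta T)$, not finitely many, but this is harmless since each contributes a factor of modulus at most $1$ and the tail sum $\sum_{\mathrm{tail}}|p_m|^2$ still diverges; (iii) you should state explicitly that your CLT is a $\delta\to 0$ triangular-array limit (with $N\delta T$ kept large), so it does not contradict Theorem~\ref{thm:qpsknogaus}, where the Lindeberg condition \eqref{eqn:lindebergdef} fails in the different regime of fixed $\delta$ and $N\to\infty$.
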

\begin{proof}
    The proof is provided in Appendix \ref{prof:qpskrxsnr}.
\end{proof}
To calculate the  {IAPR} in this scenario, we use the same definition as in \eqref{eqn:defccdf}.  We then apply a similar discussion as we did for transmit SNR fixed.  We replace $\gamma$ with $\gamma' P_k$ in \eqref{eqn:defccdf}, and observe that the average CCDF of {IAPR} of FTN transmission for fixed $\mathsf{SNR_{rx}}$ has the same behavior as its $\mathsf{SNR_{tx}}$ fixed counterpart. 

\begin{remark}
Assume $PT = E$ for all $\delta$. As \(\delta\) decreases, the symbol energy in the \(\mathsf{SNR_{rx}}\) fixed case remains constant at \(E\), whereas the symbol energy in the \(\mathsf{SNR_{tx}}\) fixed case, \(P_k \delta T\), decreases because \(P_k\) remains constant. When comparing their physical power, the \(\mathsf{SNR_{rx}}\) fixed FTN has higher physical power, given by \(\frac{E}{\delta T}\), while the \(\mathsf{SNR_{tx}}\) fixed FTN maintains a constant physical power proportional to \(P_k\). Examining the CCDF of instantaneous power, we observe that for the same threshold \(\gamma\), the \(\mathsf{SNR_{rx}}\) fixed FTN is more likely to exhibit instantaneous power values larger than \(\gamma\), due to its higher symbol energy. Consequently, the CCDF curve for \(\mathsf{SNR_{rx}}\) fixed FTN lies above that of \(\mathsf{SNR_{tx}}\) fixed FTN and approaches the horizontal line faster as \(\delta \to 0\). This shows that keeping the symbol energy constant for {decreasing acceleration factor} deteriorates the {IAPR} performance faster than maintaining constant physical transmission power.
\end{remark}
{\begin{remark}
    The IAPR distribution expression for the QPSK symbol set can be extended to other constellation sets by changing the joint characteristic function $\Phi(\zeta\cos\phi,\zeta\sin\phi;t)$ of \eqref{eqn:intchar} according to the constellation. 
\end{remark}}

\vspace{-0.2cm}
\subsection{Asymptotic Behavior of FTN Signaling}

In Theorem \ref{thm:gaustxsnr}, we learned that the average CCDF of {IAPR} of FTN signaling for fixed $\mathsf{SNR_{tx}}$ with Gaussian symbols does not change with $\delta$ as $\delta$ goes to zero. Meanwhile, the average CCDF of {IAPR} of FTN signaling for fixed $\mathsf{SNR_{tx}}$ with QPSK symbols behaves differently. The average CCDF approaches 1 as $\delta\rightarrow0$ as shown in Theorem~\ref{thm:txsnrpaprqpsk}.  
Intuitively, as $\delta$ approaches 0, the signal $x_k(t)$ with QPSK symbols approaches
\begin{equation}
    x_k(t)\approx\left(\sum_{m=-M}^{M}a_k[m]\right)p(t), \label{eqn:llnequ}
\end{equation}
due to the fact that the transmitted symbols are highly packed and the signal starts to look like one-symbol transmission with all the transmitted symbols transmitted at once. One can get the impression that the process in \eqref{eqn:llnequ}  starts to look like a Gaussian process, since we can invoke the law of large numbers.  However, this is not the case if the number of symbols is large enough and consequently the process $x_k(t)$ is always a cyclostationary process. 
We have the following theorem.
\begin{theorem}
\label{thm:qpsknogaus}
    FTN signaling with QPSK symbols do not approach the Gaussian process for arbitrary small non-zero $\delta$, as long as $S$ is sufficiently large. 
\end{theorem}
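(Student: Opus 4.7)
The plan is to rule out Gaussianity of $x_k(t)$ by exhibiting a non-vanishing fourth cumulant at a fixed time, since a Gaussian random variable has every cumulant of order three or higher equal to zero. This bypasses the need to analyze the full characteristic function and converts the qualitative claim into a one-line algebraic inequality.

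First I would fix $t\in[0,\delta T)$ and work with the real part $x_{r,k}(t)=\sum_{m=-M}^{M}a_{r,k}[m]\,p(t-m\delta T)$. Because the QPSK real parts $a_{r,k}[m]$ are i.i.d.\ and symmetric on $\pm\sqrt{P_k\delta T/2}$, the summands are independent and cumulants add. A direct moment computation gives $\mathbb{E}[a_{r,k}[m]^2]=P_k\delta T/2$ and $\mathbb{E}[a_{r,k}[m]^4]=(P_k\delta T/2)^2$, so the fourth cumulant of each symbol is $\kappa_4(a_{r,k}[m])=\mathbb{E}[a_{r,k}[m]^4]-3\bigl(\mathbb{E}[a_{r,k}[m]^2]\bigr)^{2}=-2(P_k\delta T/2)^2\neq 0$.

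Applying the homogeneity $\kappa_4(cX)=c^4\kappa_4(X)$ and additivity of cumulants over independent summands gives
\[ \kappa_4\bigl(x_{r,k}(t)\bigr)=-2\Bigl(\frac{P_k\delta T}{2}\Bigr)^{\!2}\sum_{m=-M}^{M}p(t-m\delta T)^{4}. \]
For any $\delta>0$ and any $t$ the right-hand sum is strictly positive as soon as $S=2M+1$ is large enough to include indices with non-negligible $p(t-m\delta T)$, and it converges to the strictly positive constant $\sum_{m\in\mathbb{Z}}p(t-m\delta T)^{4}$. Hence $\kappa_4(x_{r,k}(t))\neq 0$ for every sufficiently large $S$ and every non-zero $\delta$, so $x_{r,k}(t)$ --- and therefore $x_k(t)=x_{r,k}(t)+jx_{i,k}(t)$ --- is not Gaussian. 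Non-Gaussianity of this one-dimensional marginal already prevents $x_k(t)$ from being a Gaussian process.

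The main obstacle is conceptual rather than technical: one must explain why the law-of-large-numbers heuristic behind \eqref{eqn:llnequ} is misleading. That heuristic implicitly treats all summands as carrying a common weight $p(t)$, which would call for $1/\sqrt{S}$ rescaling and a Gaussian limit via the classical CLT. In reality the true weights $p(t-m\delta T)$ decay in $m$ for any fixed $\delta>0$, so both $\mathrm{Var}(x_{r,k}(t))=(P_k\delta T/2)\sum_m p(t-m\delta T)^2$ and $\max_m\sqrt{P_k\delta T/2}\,|p(t-m\delta T)|$ saturate at comparable finite values as $S\to\infty$. The Lindeberg condition therefore fails and the classical CLT does not apply --- the cumulant computation above turns this failure into a rigorous non-Gaussianity statement.
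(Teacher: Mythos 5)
Your proof is correct, but it takes a genuinely different route from the paper's. The paper argues through Lindeberg's condition: it first shows that the variance $P_k\delta T\sum_m p^2(t_0-m\delta T)$ converges to $P_k$, then chooses $\epsilon<\sqrt{P_k\delta T}\,p(t_0)$ so that only the $m=0$ summand survives the truncation in the Lindeberg sum, which therefore tends to $P_k\delta T p^2(t_0)/P_k\neq 0$; it concludes that the CLT is not invoked. Your fourth-cumulant computation is more elementary and, in one respect, stronger: failure of the Lindeberg condition only shows that the classical CLT machinery does not apply (and the converse half of Lindeberg--Feller requires uniform asymptotic negligibility of the summands, which itself fails here because the $m=0$ weight never becomes negligible), whereas a nonvanishing $\kappa_4(x_{r,k}(t))=-2(P_k\delta T/2)^2\sum_m p(t-m\delta T)^4$ directly certifies that the time-$t$ marginal is non-Gaussian for every sufficiently large $S$ and that its limit cannot be Gaussian. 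The one step you leave implicit is the passage from ``$\kappa_4$ converges to a nonzero limit'' to ``the distributional limit is not Gaussian,'' which requires uniform integrability of $x_{r,k}(t)^4$ over $S$; this holds because all even moments are polynomials in the convergent sums $\sum_m p(t-m\delta T)^{2j}$. Your closing observation about why the law-of-large-numbers heuristic behind \eqref{eqn:llnequ} is misleading is essentially the paper's Lindeberg argument, so the two proofs identify the same underlying phenomenon (the dominant $m=0$ coefficient) while formalizing it differently, with yours giving the sharper non-Gaussianity conclusion.
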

\begin{proof}
    The proof is provided in Appendix \ref{app:asymgqpsknogaus}
\end{proof}

\begin{remark}
    Since the average CCDF of instant power has the same behavior as the average CCDF of {IAPR}, we can infer from Theorem \ref{thm:qpsknogaus} that as $\delta$ keeps decreasing, the average CCDF of {IAPR} for FTN using QPSK symbols with $\mathsf{SNR_{tx}}$ fixed will keep increasing and will not converge to the average CCDF of {IAPR} using Gaussian symbols with $\mathsf{SNR_{tx}}$ fixed.  
\end{remark}

\begin{figure}[t]
\centering
    \includegraphics[scale=0.47]{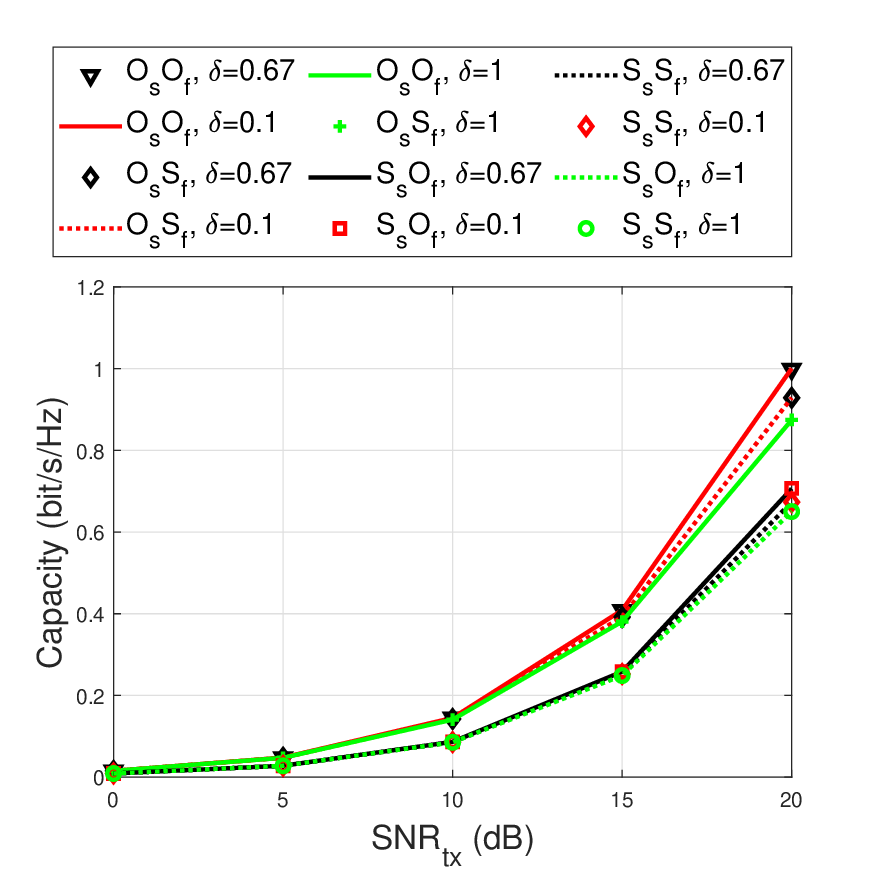}
    \caption{Capacity vs transmit SNR for four different power allocation schemes, $O_sO_f, S_sO_f, O_sS_f$, and $S_sS_f$, with different $\delta$ values. The MIMO size is $2\times 2$.}
    \label{fig:txsnrvsc}
    \centering
    \includegraphics[scale=0.47]{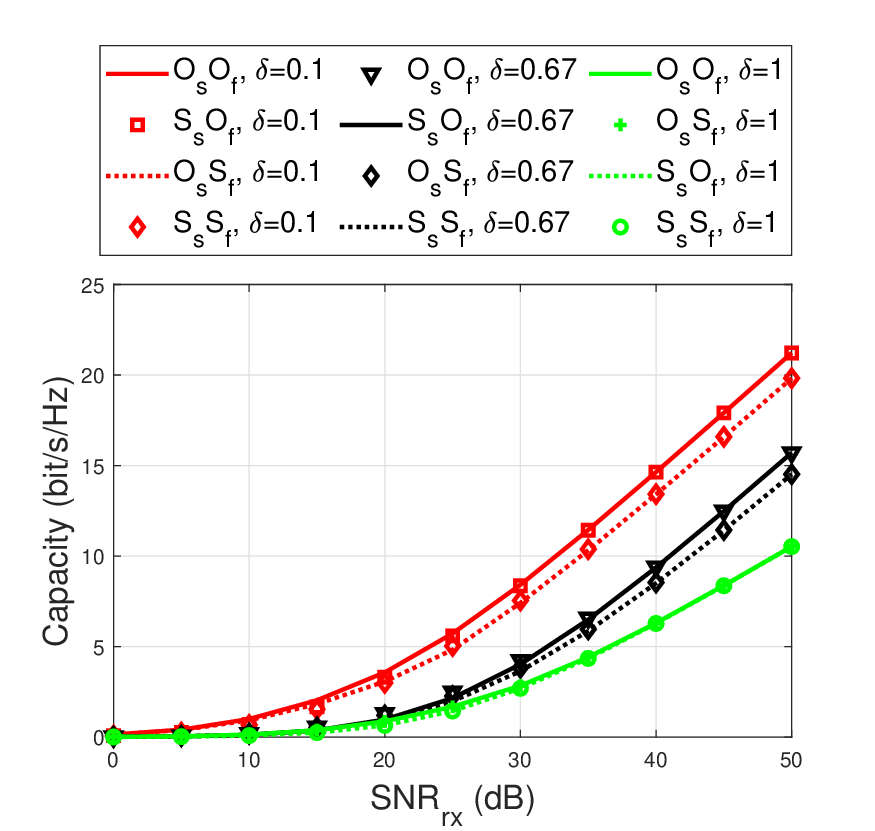}
    \caption{Capacity vs receive SNR for four different power allocation schemes, $O_sO_f, S_sO_f, O_sS_f$, and $S_sS_f$, with different $\delta$ values. The MIMO size is $2\times 2$.}
    \label{fig:rxsnrvsc}
    
\end{figure}

\section{Simulation Results}\label{sec:simresult}

In this section, we show the capacity and {IAPR} of MIMO FTN signaling under different power allocation schemes with either transmit or receive power constraints. 

 In the simulations conducted in this section, we set the symbol period $T=0.01$. The simulation results are averaged over 1000 random channel realizations. In our simulations, the MIMO channel coefficients $h_{l,k}$ are i.i.d. and complex Gaussian distributed according to  $\mathcal{CN}\left(0,\frac{1}{K}\right)$. We assume RRC pulse shaping with roll-off factor $\beta=0.5$. Note that for this $\beta$, the threshold $\delta=0.67$. For the capacity results in Figs. \ref{fig:txsnrvsc}-\ref{fig:tauvscrxsnr}, we apply the power allocation schemes $O_sO_f$, $O_sS_f$, $S_sO_f$, and $S_sS_f$ described in Section \ref{sec:capacityderivation}. For the {IAPR} simulations in Figs.~\ref{fig:gaustxsnr}-\ref{fig:figBqpsk}, we transmit $1000$ symbols.

\begin{figure}[t]
    \centering
    \includegraphics[scale=0.50]{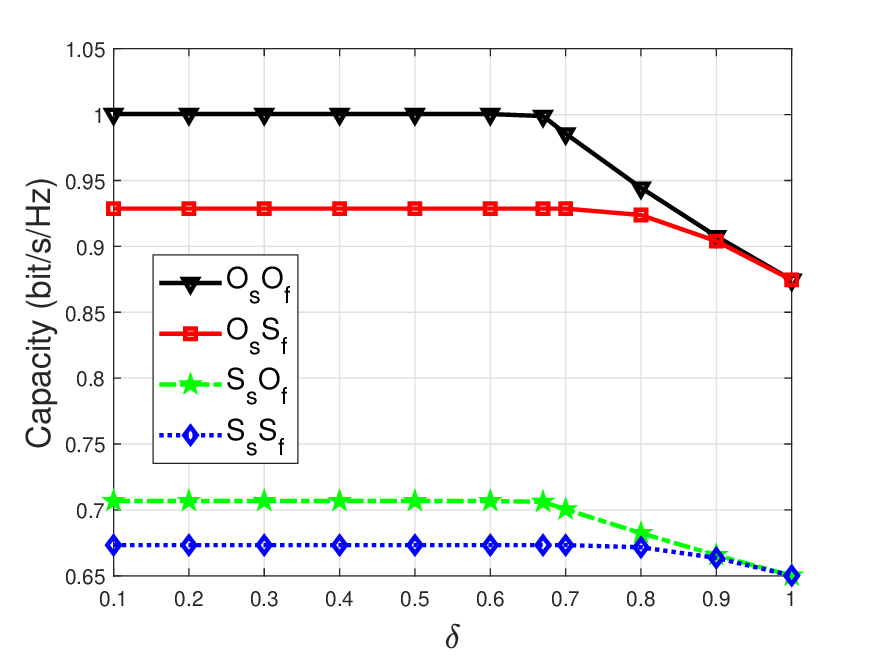}
    \caption{Capacity vs $\delta$ for four different power allocation schemes with transmit  SNR $\mathsf{SNR_{tx}}=20 $ dB. The MIMO size is $2\times 2$ with threshold $\delta$ value to be 0.67.}
    \label{fig:tauvsctxsnr}
\end{figure}

Fig.~\ref{fig:txsnrvsc} shows the MIMO FTN capacity versus $\mathsf{SNR_{tx}}$ for various $\delta$ values and for four power allocation schemes defined in \eqref{eqn:optsolu}, \eqref{eqn:ssof}, \eqref{eqn:ossf}, and \eqref{eqn:sssf}. For $\delta=0.67$ and $\delta=0.1$, the curves overlap as predicted by Theorem \ref{thm:them1}, since MIMO FTN capacity saturates below $\delta = \frac{1}{1+\beta}$. At $\delta=1$, the curves for $O_sO_f$ and $S_sO_f$ align with $O_sS_f$ and $S_sS_f$, as channel inversion yields the same result as uniform power allocation for Nyquist transmission. Compared to Nyquist signaling with uniform power allocation ($\delta=1$ and $S_sS_f$), FTN signaling with optimal power allocation ($\delta=0.67$ and $O_sO_f$) boosts rates by about $50\%$. However, uniform power allocation in frequency still performs well, as seen by comparing $O_sO_f$ and $O_sS_f$ at $\delta=0.67$. Therefore, in practice, uniform power allocation in the frequency domain can still be applied to avoid the extreme values brought by spectrum inversion.


Fig.~\ref{fig:rxsnrvsc} examines MIMO FTN capacity versus $\mathsf{SNR_{rx}}$. The $\delta=1$ curves show the worst performance, as smaller $\delta$ enables higher symbol rates despite equal symbol power. At high SNR, $O_sO_f$ and $O_sS_f$ curves converge with $S_sO_f$ and $S_sS_f$, as water-filling's impact diminishes. FTN transmission with optimal power allocation significantly outperforms Nyquist signaling ($\delta=1$), with $O_sO_f$ and $O_sS_f$ at $\delta=0.67$ achieving a $50\%$ improvement. The advantage is even greater for fixed receive SNR, where higher transmission rates amplify the benefits of FTN while the received symbol energy is kept the same. However, unlike in Fig.~\ref{fig:txsnrvsc}, where $\delta=0.1$ and $\delta=0.67$ curves overlap for all power allocation schemes, this is not observed in Fig.~\ref{fig:rxsnrvsc}. For fixed  $\mathsf{SNR_{rx}}$, although the symbol energy $P\delta T$ remains constant across accelerations, smaller $\delta$ requires higher physical transmission power $P$ to maintain the same symbol energy, amplifying the benefits of FTN.

\begin{figure}[t]
    \centering
    \includegraphics[scale=0.50]{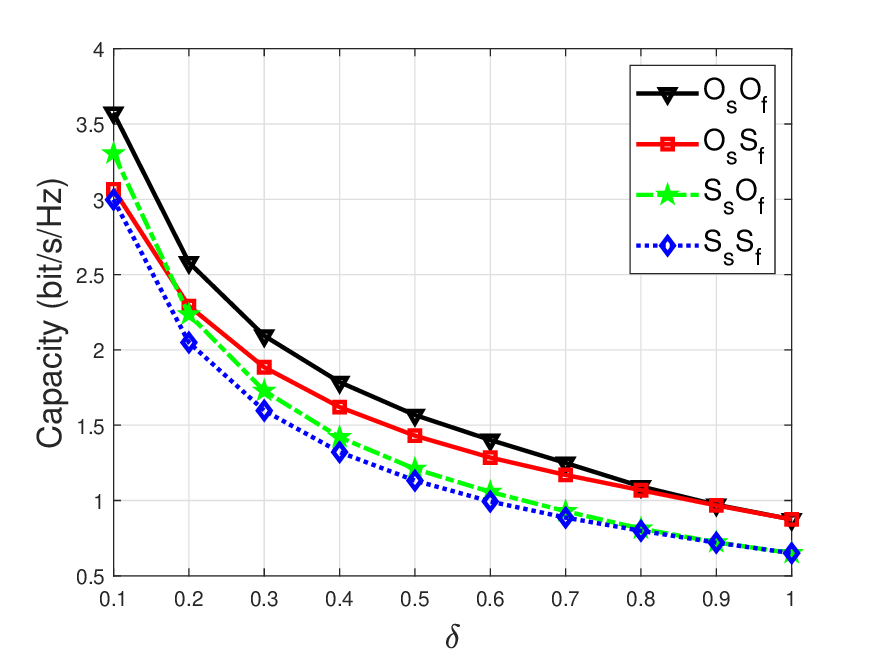}
    \caption{Capacity vs $\delta$ for four different power allocation schemes with receive  SNR $\mathsf{SNR_{rx}}=20 $ dB. The MIMO size is $2\times 2$ with threshold $\delta$ value to be 0.67.}
    \label{fig:tauvscrxsnr}
\end{figure}

In Fig. \ref{fig:tauvsctxsnr}, we show the capacity of MIMO FTN with respect to $\delta$ for fixed transmission power.  As we explained in Section \ref{sec:capacityderivation}, the capacity increases as $\delta$ decreases until $\delta$ reaches $\frac{1}{1+\beta}$, then it remains fixed. We can also see that the curves with optimal frequency domain power allocation scheme $O_f$ grow faster as $\delta$ decreases. 
We also notice that spatial domain water-filling $O_s$ provides a more significant gain than frequency domain inversion over the support.  Meanwhile, in Fig.~\ref{fig:tauvscrxsnr} we again plot the capacity of MIMO FTN with respect to $\delta$ but with fixed $\mathsf{SNR_{rx}}$ instead. In this case, as $\delta$ decreases, capacity keeps increasing. Moreover, as $\mathsf{SNR_{rx}}$ increases the improvement brought by spatial domain water-filling, $O_s$, is less significant.

\begin{figure}
    \centering
    \includegraphics[width=0.9\linewidth]{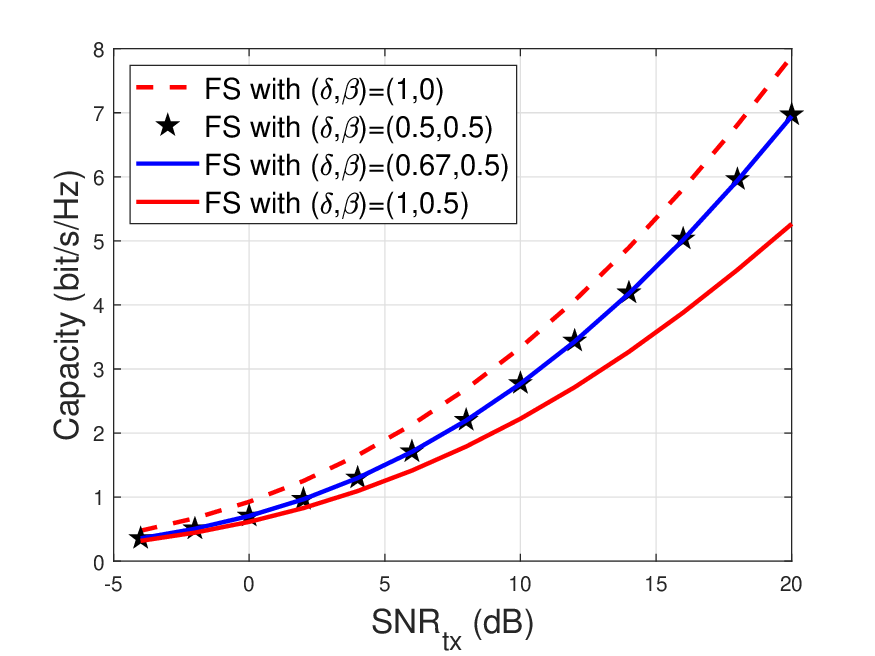}
    \caption{Capacity of MIMO FTN in FS channels vs $\mathsf{SNR_{tx}}$ for different $(\delta,\beta)$ pairs, where the MIMO size is $2\times2$.}
    \label{fig:fsfccap}
\end{figure}



{Note that, whenever FTN is used, optimal power allocation in the frequency domain will always be the best. However, due to the amount of computation required for shaping the power spectrum of the input, in practice, FTN signaling with i.i.d. input distribution can still perform well. On the other hand, although water-filling in the spatial domain also improves the capacity compared to uniform power allocation, channel state information (CSI) is not always available at the transmitter, or we may have imperfect CSI. Imperfect CSI due to estimation errors or outdated channel information leads to capacity degradation and reduced eigenchannel utilization. In practice, if CSI is unavailable or inaccurate, uniform power allocation in the spatial domain can be used.}

{In Fig.~\ref{fig:fsfccap}, we show the capacity of MIMO FTN in FS channels for different $(\delta,\beta)$ pairs. We assumed the delay taps are in a range of $[0,2T)$. We assume all channel coefficients $h_{kl}^j$ are independent and identically distributed according to the complex Gaussian distribution $\mathcal{CN}(0,1/(KJ))$. We set the number of taps to $J=20$, {$N=1000$}, and MIMO size $2\times2$. We apply the optimal power allocation in \eqref{eqn:fsfcoptpowallo}. We can see that the capacity stops growing after $\delta$ reaches the threshold $\frac{1}{1+\beta}$.   }

To evaluate the practicality of MIMO FTN signaling, we next simulate the {IAPR} performance under various signaling rates and plot the empirical CCDF of the {IAPR} to analyze outage probability and threshold $\gamma$, particularly at {small acceleration factor}. 
The definition of average CCDF of instant power can be found in \eqref{eqn:aveccdfdef}, we then replace $\gamma$ with $\gamma' P_k$ to obtain the average CCDF of {IAPR} as shown in Remark \ref{rem:inspowccdfconvert}.  





Fig.~\ref{fig:gaustxsnr} shows the CCDF of SISO FTN with a Gaussian symbol set at a fixed transmit SNR of $\mathsf{SNR_{tx}}=20$ dB. As discussed in Section \ref{sec:gausthy}, the {IAPR} distribution is unaffected by $\delta$. The theoretical distribution from \eqref{eqn:gausthytx} closely matches the simulation results. Fig.~\ref{fig:gausrxsnr} presents the average CCDF of SISO FTN with Gaussian symbols for fixed $\mathsf{SNR_{tx}}$, showing that decreasing $\delta$ degrades performance at a fixed $\mathsf{SNR_{rx}}$.


\begin{figure}[t]
    \centering
    \includegraphics[scale=0.50]{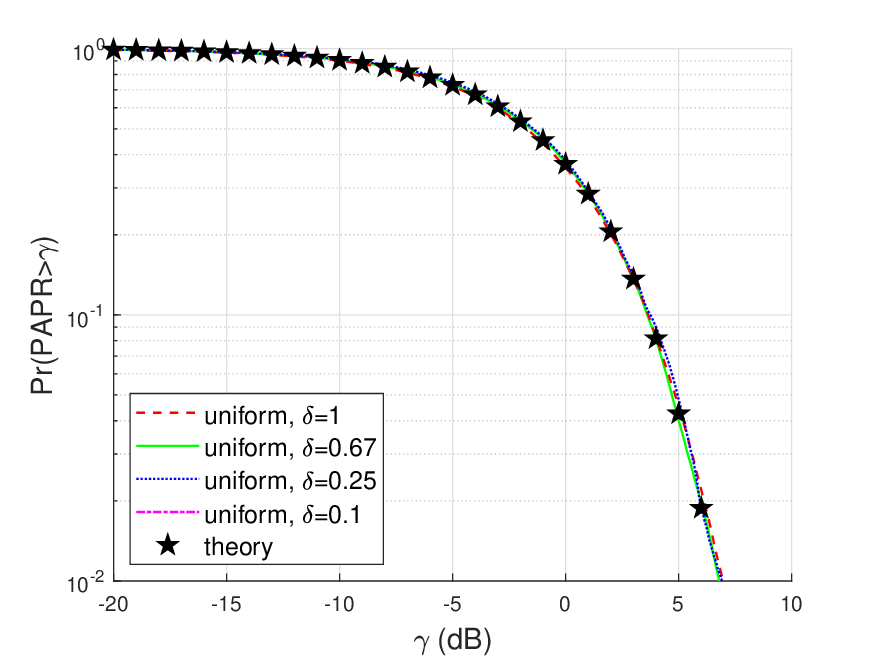}
    \caption{CCDF of SISO FTN for different $\delta$ for only uniform power allocation both in space and in frequency. The Gaussian symbol set is used, $\mathsf{SNR_{tx}}$ is fixed. }
    \label{fig:gaustxsnr}
\end{figure}
\begin{figure}[t]
    \centering
    \includegraphics[scale=0.50]{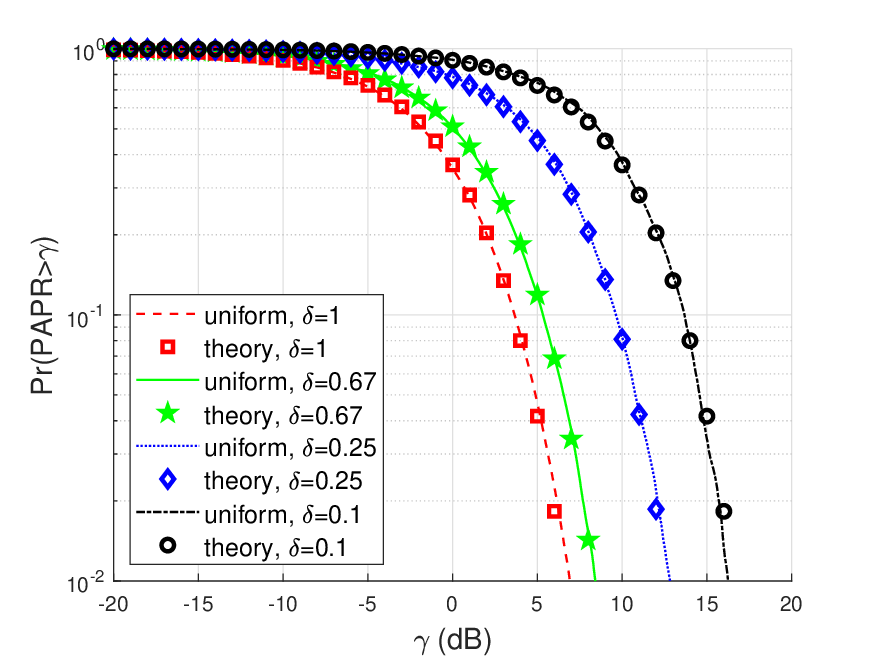}
    \caption{CCDF of SISO FTN for different $\delta$ for only uniform power allocation both in space and in frequency. The Gaussian symbol set is used, $\mathsf{SNR_{rx}}$ is fixed. The theoretical CCDF in \eqref{eqn:gausrxsnrthy} is also plotted. }
    \label{fig:gausrxsnr}
\end{figure}
\begin{figure}[t]
    \centering
    \includegraphics[scale=0.50]{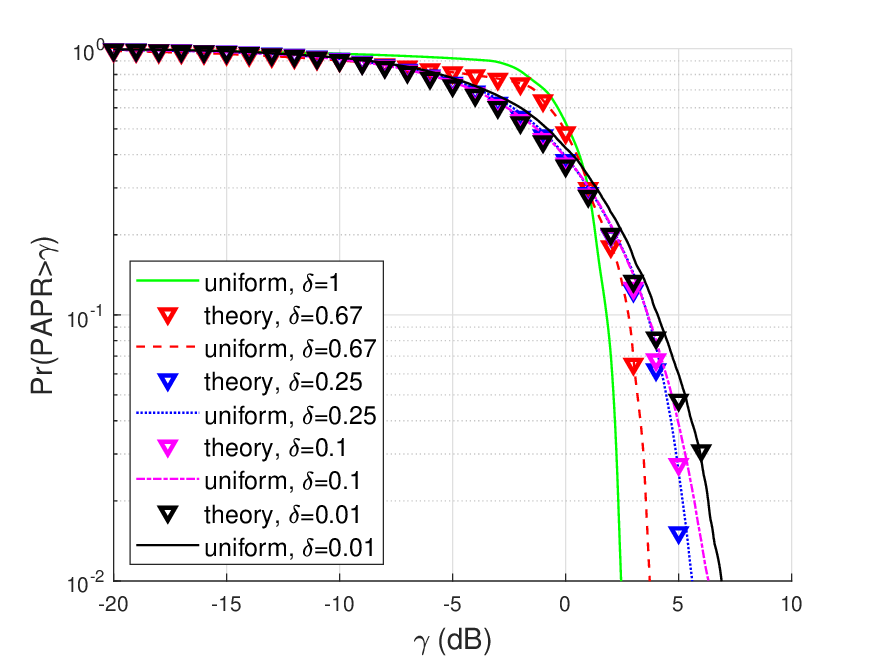}
    \caption{CCDF of SISO FTN with different $\delta$ values for only uniform power allocation both in space and in frequency. The QPSK symbol set is used, $\mathsf{SNR_{tx}}$ is fixed. The theoretical curve in \eqref{eqn:extdist} is also plotted.}
    \label{fig:qpsktxsnr}
\end{figure}
\begin{figure}[t]
    \centering
    \includegraphics[scale=0.50]{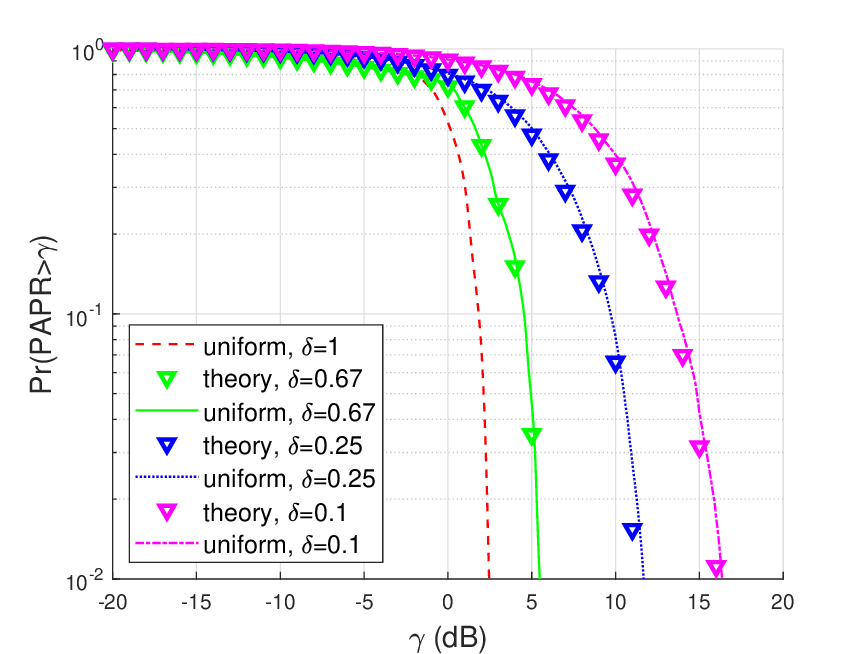}
    \caption{CCDF of SISO FTN for different $\delta$ for only uniform power allocation both in space and in frequency. The QPSK symbol set is used, $\mathsf{SNR_{rx}}$ is fixed. The theoretical CCDF in \eqref{eqn:extdist} is also plotted. }
    \label{fig:qpskrxsnr}
\end{figure}

Fig.~\ref{fig:qpsktxsnr} shows the CCDF for the QPSK symbol set at $\mathsf{SNR_{tx}}=20$ dB, with the theoretical distribution from \eqref{eqn:extdist} aligning with simulation results. The CCDF rises as $\delta$ decreases, consistent with Section \ref{sec:thyqpsk}. Similarly,  Fig.~\ref{fig:qpskrxsnr} illustrates the average CCDF of SISO FTN with QPSK at $\mathsf{SNR_{rx}}=20$ dB, highlighting the impact of different acceleration factors. As $\delta$ decreases, the {IAPR} distribution worsens. Since the symbol power is fixed for different $\delta$ values, {smaller acceleration factor} increases pulse overlap and causes more {IAPR} spread. However, QPSK generally outperforms the Gaussian symbol set in {IAPR} distribution.


\begin{figure}[t]
    \centering
    \includegraphics[scale=0.50]{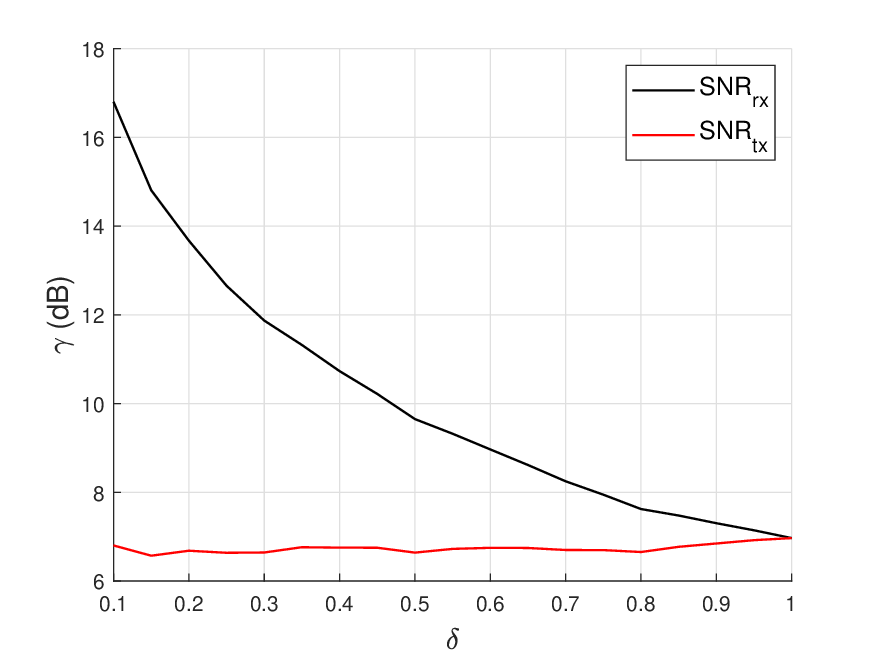}
    \caption{Gamma value versus $\delta$ for  Gaussian symbol set. }
    \label{fig:figBgaus}
\end{figure}
\begin{figure}[t]
    \centering
    \includegraphics[scale=0.50]{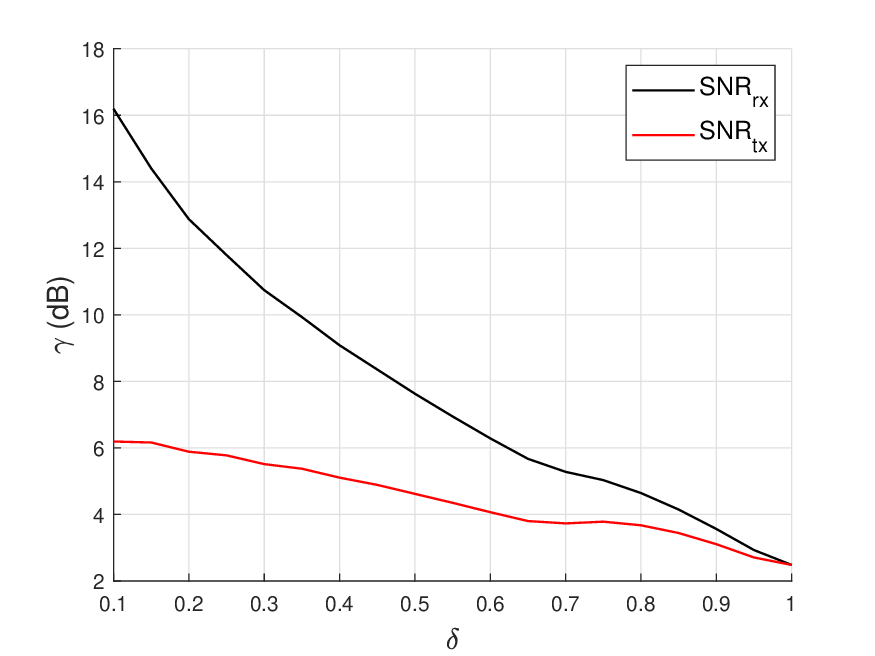}
    \caption{Gamma value versus $\delta$ for  QPSK symbol set. }
    \label{fig:figBqpsk}
\end{figure}

Finally, we simulate the outage threshold $\gamma$ for a fixed outage probability of 0.01. Fig.~\ref{fig:figBgaus} shows the effect of $\delta$ on $\gamma$ in SISO FTN using the Gaussian symbol set with uniform power allocation $S_sS_f$, comparing $\mathsf{SNR_{tx}}=20$~dB and $\mathsf{SNR_{rx}}=20$~dB. With fixed transmit SNR, the outage threshold stays high, while for fixed receive SNR, $\gamma$ increases as $\delta$ decreases, aligning with Fig.~\ref{fig:gausrxsnr}. Fig.~\ref{fig:figBqpsk} presents similar results for the QPSK symbol set, showing that QPSK achieves better performance with smaller $\gamma$ values than the Gaussian symbol set.


Note that we obtain stable {IAPR} results for frequency domain uniform power allocation schemes, $S_f$, in the simulations conducted in Figs. \ref{fig:figBgaus} and \ref{fig:figBqpsk}. The inversion of the frequency spectrum magnifies the close-to-zero values and these values are highly sensitive to $\delta$, number of symbols $N$, and the simulation length.

{Under fixed $\mathsf{SNR_{tx}}$, decreasing the FTN acceleration factor $\delta$ increases IAPR and thus outage probability, requiring greater PA back-off and reducing power efficiency, while the capacity remains unchanged. Therefore, choosing $\delta \approx \frac{1}{1+\beta}$ offers a practical balance between performance and IAPR. In contrast, under fixed $\mathsf{SNR_{rx}}$, decreasing $\delta$ improves capacity but sharply increases IAPR, demanding larger back-off. Thus, a trade-off arises between spectral efficiency and PA efficiency, which must be carefully managed in FTN system design.}

\section{Conclusion}\label{sec:conclusion}
In this paper, we investigate the capacity of FTN signaling for frequency-flat  MIMO
channels for arbitrary {small acceleration factor}. The frequency domain expression is derived for the MIMO FTN capacity calculations. The optimum power allocation scheme is spatial domain water-filling and frequency domain channel inversion for all $\delta$. 
 For RRC pulses with roll-off factor $\beta$, and for fixed transmit power, for $\delta<\frac{1}{1+\beta}$, the capacity is independent of $\delta$ and remains constant. Moreover, we derive the theoretical {IAPR} distribution of FTN for Gaussian and QPSK signaling. The theoretical distributions closely align with the simulation results. We find that as {the acceleration factor decreases}, the {IAPR} grows unbounded for fixed $\mathsf{SNR_{rx}}$ case for both Gaussian and QPSK symbols. It is also unbounded for QPSK for fixed $\mathsf{SNR_{tx}}$.     
 Overall, we show
that MIMO FTN significantly improves spectral efficiency
with respect to Nyquist transmission, and practical FTN designs should take {IAPR} into account. As future work, we plan to perform a comprehensive analysis of BER and spectral efficiency in conjunction with {IAPR} evaluations. The real potential of FTN will reveal itself when different modulation and coding schemes at different acceleration factors are compared under a given {IAPR} constraint. 

\appendices
\section{Proof for Theorem \ref{thm:gaustxsnr}}
\label{prof:gaus}
We can find the variance for $x_{r,k}(t)$ and $x_{i,k}(t)$ as 
\begin{align}
    \mathbb{E}\left[|x_{r,k}(t)|^2\right]&=\sum_{i=-M}^M\mathbb{E}\left[|a_{r,k}[i]|^2\right]p^2_{i,\delta}(t) \notag
    \end{align}
    \begin{align}
    &=\frac{P_k\delta T}{2}\sum_{i=-M}^Mp^2_{i,\delta}(t)\notag\\&=\mathbb{E}\left[|x_{i,k}(t)|^2\right].
\end{align}
The random variable  $|x_k(t)|^2=|x_{r,k}(t)|^2+|x_{i,k}(t)|^2$ follows a scaled chi-squared distribution with two degrees of freedom, which is also a Gamma distribution with parameters $1$ and $P_k\delta T\sum_{i=-M}^Mp^2_i(t)$.
The average CCDF $\bar{\mathcal{C}}(\gamma)$ is written as
\begin{align}
    \bar{\mathcal{C}}(\gamma)=\frac{1}{\delta T}\int_0^{\delta T}e^{-\frac{\gamma}{P_k\delta T\sum_{-M}^Mp^2_m(t)}}dt. \label{eqn:ccdfgaus}
\end{align}

We assume that the energy of $p(t)$ is concentrated in a finite time period. 
Since we assumed that the number of symbols is large enough, we can approximate $\sum_{i=-M}^Mp^2_i(t)$ by $\sum_{i=-\infty}^{+\infty}p^2_i(t)$. Since we focus on the asymptotic behavior of the CCDF as $\delta$ goes to zero, we might as well assume $\delta\leq\frac{1}{1+\beta}$.
\begingroup
\setlength{\belowdisplayskip}{-15pt}
\begin{figure*}
    \begin{align}
    \sum_{n=-\infty}^{\infty}\left|p(t-n\delta T)\right|^2
    &=\int_{-\frac{1}{2}}^{\frac{1}{2}}\left|\frac{1}{\delta T}\sum_{n=-\infty}^{\infty}\sqrt{G\left(\frac{f_n-n}{\delta T}\right)}e^{-j2\pi \left(\frac{f_n-n}{\delta T}\right)t}\right|^2df_n \label{eqn:longpaprbeg}\\ 
    &=\delta T\int_{-\frac{1}{2\delta T}}^{\frac{1}{2\delta T}}\left|\frac{1}{\delta T}\sum_{n=-\infty}^{\infty}\sqrt{G\left(f-\frac{n}{\delta T}\right)}e^{-j2\pi \left(f-\frac{n}{\delta T}\right)t}\right|^2df \\
    &=\frac{1}{\delta T}\int_{-\frac{1}{2\delta T}}^{\frac{1}{2\delta T}}\sum_{n=-\infty}^{\infty}\sum_{m=-\infty}^{\infty}\sqrt{G\left(f-\frac{n}{\delta T}\right)}\sqrt{G\left(f-\frac{m}{\delta T}\right)}e^{-j2\pi \frac{(n-m)}{\delta T}t}df \\
    &=\frac{1}{\delta T}\int_{-\frac{1}{2\delta T}}^{\frac{1}{2\delta T}}\sum_{n=-1}^{1}\sum_{m=-1}^{1}\sqrt{G\left(f-\frac{n}{\delta T}\right)}\sqrt{G\left(f-\frac{m}{\delta T}\right)}e^{-j2\pi \frac{(n-m)}{\delta T}t}df \label{eqn:leaveone}\\
    &=\frac{1}{\delta T}\int_{-\frac{1}{2\delta T}}^{\frac{1}{2\delta T}} \left(G(f)+2\left(\sqrt{G\left(f-\frac{1}{\delta T}\right)G(f)}+\sqrt{G\left(f+\frac{1}{\delta T}\right)G(f)}\right)\cos\left(\frac{2\pi t}{\delta T}\right)\right)df  \label{eqn:longpaprend}
\end{align}
\end{figure*}

\endgroup
We then obtain the derivations in \eqref{eqn:longpaprbeg}-\eqref{eqn:longpaprend}. Here \eqref{eqn:leaveone} is because we use RRC for $p(t)$, and inside the integration interval $[-\frac{1}{2\delta T},\frac{1}{2\delta T}]$, only the closest shifted versions of $G(f)$ will be present; i.e. $m=\pm1$ and $n=\pm1$. Furthermore, if $\delta(1+\beta)\leq 1$, there will only be one spectrum $G(f)$ inside the interval. Then in \eqref{eqn:longpaprend}, as $\delta\leq\frac{1}{1+\beta}$, the term $2\left(\sqrt{G\left(f-\frac{n}{\delta T}\right)G(f)}+\sqrt{G\left(f+\frac{n}{\delta T}\right)G(f)}\right)\cos\left(\frac{2\pi t}{\delta T}\right)$ will disappear in the integration. Therefore, we have
\begin{align}
    \sum_{i=-\infty}^{+\infty}p^2_i(t)&=\sum_{i=-\infty}^{+\infty}|p_i(t)|^2=\frac{1}{\delta T}\int_{-\frac{1}{2\delta T}}^{\frac{1}{2\delta T}}G(f)df, 
\end{align}
and \eqref{eqn:ccdfgaus} becomes
\begin{equation}
    \bar{\mathcal{C}}(\gamma)=\exp\left(-\frac{\gamma}{P_k\int_{-\frac{1}{2\delta T}}^{\frac{1}{2\delta T}}G(f)df}\right). \label{eqn:gausthytxapp}
\end{equation}

\vspace{-0.31cm}

\section{Proof for Theorem \ref{thm:txsnrpaprqpsk}}
\label{app:proofthmpaprtxqpsk}
In this section, we prove that the average CCDF for the process $x_k(t)$ approaches $1$ as $\delta$ approaches $0$ using the QPSK symbol set with $\mathsf{SNR_{tx}}$ fixed.

The random variable $x_k(t_0)=\sum_{m=-M}^{M}a_k[m]p(t_0-m\delta T)$ can be viewed as the linear combination of $2M-1$ independent random variables. For ease of notation, we denote $p(t-m\delta T)$ as $p_{m,\delta}(t)$. Therefore, we can decompose $\Phi(u,v;t_0)$ into 
\begin{equation}
    \Phi(u,v;t_0)=\prod_{m=-M}^M\Phi(p_{m,\delta}(t_0)u,p_{m,\delta}(t_0)v),
\end{equation}
where $\Phi(p_{m,\delta}(t_0)u,p_{m,\delta}(t_0)v)$ is the joint characteristic function of $p_{m,\delta}(t_0){a_{r,k}[m]}$ and $p_{m,\delta}(t_0){a_{i,k}[m]}$. Considering the fact that {$a_{r,k}[m]$} and {$a_{i,k}[m]$} are independent, $\Phi(p_{m,\delta}(t_0)u,p_{m,\delta}(t_0)v)$ can be decomposed into the multiplication of the characteristic functions of $p_{m,\delta}(t_0){a_{r,k}[m]}$ and $p_{m,\delta}(t_0){a_{i,k}[m]}$, namely, 
\begin{equation}
    \Phi(p_{m,\delta}(t_0)u,p_{m,\delta}(t_0)v)=\Phi(p_{m,\delta}(t_0)u)\Phi(p_{m,\delta}(t_0)v).
\end{equation}
Since the symbols $a_k[m]$ are drawn uniformly from the constellation, we know that 
\begin{align}
    \Phi(p_{m,\delta}(t_0)u)
    &=\int_{-\infty}^{+\infty}\left(\frac{1}{2}\delta(\nu-\sqrt{P_k\delta T/2}p_{m,\delta}(t_0)) \right.\notag\\
    &~\quad\quad\quad+\left.\frac{1}{2}\delta(\nu+\sqrt{P_k\delta T/2}p_{m,\delta}(t_0))\right)e^{j\nu u}d\nu\notag \\
    &=\cos(\sqrt{P_k\delta T/2}p_{m,\delta}(t_0)u).
\end{align}
     Similarly, we have
\begin{equation}
    \Phi(p_{m,\delta}(t_0)v)=\cos(\sqrt{P_k\delta T/2}p_{m,\delta}(t_0)v).
\end{equation}
Eventually, we obtain 
\begin{align}
    &\bar{\mathcal{C}}(\gamma)=1-\frac{1}{2\pi\delta T} \int_0^{\delta T}\sqrt{\gamma}\int_0^\infty J_1(\sqrt{\gamma}\zeta) \int_0^{2\pi}\prod_{m=-M}^M \notag\\ &\cos(\sqrt{P_k\delta T/2}p_{m,\delta}(t){\zeta}\cos\phi)\cos(\sqrt{P_k\delta T/2}p_{m,\delta}(t){\zeta}\sin\phi)d\phi d\zeta dt. \label{eqn:extdist}
\end{align}
Note that it is possible that for some $m$ and $t$, $p_{m,\delta}(t)$ equals zero, for the corresponding $m$ and $t$, the term $\cos(\sqrt{P_k\delta T/2}p_{m,\delta}(t)\sqrt{\zeta}\cos\phi)\cos(\sqrt{P_k\delta T/2}p_{m,\delta}(t)\sqrt{\zeta}\sin\phi)$ becomes $1$. However, because of the {small acceleration factor}, other values of $m$ and $t$ will still lead to non-zero $p_{m,\delta}(t)$. Therefore for those $m$ and $t$ that $p_{m,\delta}(t)=0$, their contribution to the whole product $\prod_{m=-M}^M$ $\cos(\sqrt{P_k\delta T/2}p_{m,\delta}(t){\zeta}\cos\phi)\cos(\sqrt{P_k\delta T/2}p_{m,\delta}(t){\zeta}\sin\phi)$ is $1$. Meanwhile, if $\delta=1$, it is possible that for some $m$ and $t$ there is only one term left in the product, for example, when $t=0$. 

We now proceed to discuss the behavior of the distribution at extreme $\delta$ values. Assuming that $\delta$ is sufficiently small,  we are able to approximate the product $\prod_{m=-M}^M$ $\cos(\sqrt{P_k\delta T/2}p_{m,\delta}(t){\zeta}\cos\phi)\cos(\sqrt{P_k\delta T/2}p_{m,\delta}(t){\zeta}\sin\phi)$ with its Taylor expansion. For ease of notation, we denote 
\begin{align}
   c_0= \sqrt{P_k\delta T/2}{\zeta}\cos\phi ,\quad
   c_1=\sqrt{P_k\delta T/2}{\zeta}\sin\phi. \label{eqn:c0c1}
\end{align} We define the index set $\mathcal{I}=\{-M,-M+1, \dots, M-1, \allowbreak M\}$. We use Taylor expansion about $t$ around point $0$ and approximate the multiplication $\Pi_{m=-M}^M\allowbreak\cos(c_0p_{m,\delta}(t))\cos(c_1p_{m,\delta}(t))$ by keeping only the zero-order and the first-order items.  Eventually, we get the approximation in \eqref{eqn:approx}.
\begingroup\setlength{\belowdisplayskip}{-10pt}
\begin{figure*}
    \begin{align}
        &\prod_{m=-M}^M\cos(c_0p_{m,\delta}(t))\cos(c_1p_{m,\delta}(t)) \notag\\
        &\approx \prod_{m=-M}^M\cos(c_0p_{m,\delta}(0))\cos(c_1p_{m,\delta}(0)) - t \bigg(\sum_{m=-M}^Mc_0p'_m(0)\sin(c_0p_{m,\delta}(0))\prod_{n\in\mathcal{I}/m}\cos(c_0p_n(0))\prod_{n=-M}^M\cos(c_1p_n(0)) \notag \\
        & \quad\quad\quad\quad\quad\quad\quad\quad\quad\quad\quad\quad+\sum_{m=-M}^Mc_0p'_m(0)\sin(c_1p_{m,\delta}(0))\prod_{n=-M}^M\cos(c_0p_n(0))\prod_{n\in\mathcal{I}/m}\cos(c_1p_n(0))\bigg)\label{eqn:approx}
    \end{align}
\end{figure*}
\endgroup
The result becomes a linear function of $t$, we plug \eqref{eqn:approx} back into \eqref{eqn:extdist} and get the approximation of average CCDF in \eqref{eqn:distapprox}, where the $\frac{\delta T}{2}$ term comes from the integration $\int_0^{\delta T}tdt$.
\begingroup
\setlength{\belowdisplayskip}{-11pt}
\begin{figure*}
    \begin{align}
    \hline
        &\bar{\mathcal{C}}(\gamma)\approx1-\frac{\sqrt{\gamma}}{2\pi} \int_0^\infty J_1(\sqrt{\gamma}\zeta)\int_0^{2\pi} \prod_{m=-M}^M\cos(c_0p_{m,\delta}(0))\cos(c_1p_{m,\delta}(0))-\frac{\delta T}{2}\bigg(\sum_{m=-M}^Mc_0p'_m(0)\sin(c_0p_{m,\delta}(0))\times \notag\\
        &\prod_{n\in\mathcal{I}/m}\cos(c_0p_n(0))\prod_{n=-M}^M\cos(c_1p_n(0)) +\sum_{m=-M}^Mc_1p'_m(0)\sin(c_1p_{m,\delta}(0))\prod_{n=-M}^M\cos(c_0p_n(0))\prod_{n\in\mathcal{I}/m}\cos(c_1p_n(0))\bigg)d\phi d\zeta \label{eqn:distapprox}
    \end{align}
\end{figure*}
\endgroup
This form allows us to investigate the asymptotic behavior, as $\delta$ approaches zero, the cosine and sine terms approach $1$ and $0$ respectively. As $\int_0^\infty J_1(x)dx=0$, the average CCDF $\bar{\mathcal{C}}(\gamma)$ asymptotically becomes
\begin{align}
    \underset{\delta\rightarrow0}{\lim}\bar{\mathcal{C}}(\gamma)=1- \sqrt{\gamma}\int_0^\infty J_1(\sqrt{\gamma}\zeta)d\zeta =1.
\end{align}

\vspace{-0.2cm}

\section{Proof for Theorem \ref{thm:rxsnrqpsk}}
\label{prof:qpskrxsnr}

{As $\delta$ approaches zero, \eqref{eqn:distapprox}, which is shown on the next page, becomes
\begin{align}
    &\bar{\mathcal{C}}(\gamma)\approx1-\frac{\sqrt{\gamma}}{2\pi} \int_0^\infty J_1(\sqrt{\gamma}\zeta)\int_0^{2\pi} \prod_{m=-M}^M\cos(c_0p_{m,\delta}(0)) \notag\\
    &\quad\quad\times\cos(c_1p_{m,\delta}(0))d\phi d\zeta. \label{eqn:rxsnrapproxpre}
\end{align}
Since the product $P_k\delta$ is fixed, we can replace $P_k\delta$ with $P_\delta$. Then we plug \eqref{eqn:c0c1} into \eqref{eqn:rxsnrapproxpre} and get 
\begin{align}
    &\bar{\mathcal{C}}(\gamma)\approx1-\frac{\sqrt{\gamma}}{2\pi} \int_0^\infty J_1(\sqrt{\gamma}\zeta)\int_0^{2\pi} \prod_{m=-M}^M\cos(\sqrt{P_\delta T/2} \notag\\
    &p_{m,\delta}(0){\zeta}\cos\phi)\cos(\sqrt{P_\delta T/2}p_{m,\delta}(0){\zeta}\sin\phi)d\phi d\zeta. \label{eqn:rxsnrapprox}
\end{align}
}
As $\delta$ becomes small enough, the samples $p_{m,\delta}(t)$ can be considered equal to each other. Without loss of generality, we assume they are equal to $p_0(0)$. Then we have 
\begin{align}
    &\bar{\mathcal{C}}(\gamma)\approx1-\frac{\sqrt{\gamma}}{2\pi} \int_0^\infty J_1(\sqrt{\gamma}\zeta)\int_0^{2\pi}\cos^{2M+1}(c\cos\phi)  \notag\\
    &\quad\quad\quad\quad\quad\quad\quad\quad\quad\quad\times\cos^{2M+1}(c\sin\phi)d\phi d\zeta, \label{eqn:beforeexpandcos} 
\end{align}
where we denote $\sqrt{P_\delta T/2}p_0(0){\zeta}$ as $c$. 
{By applying trigonometric identities, $\cos^{2M+1}(c\cos\phi)$ can be expanded as in \eqref{eqn:cosexpand}.}
\begin{figure*}

    \begin{align}
    \hline
        \cos^{2M+1}(c\cos\phi)&=\frac{1}{2^{2M}}\big[\cos(c\cos\phi+c\cos\phi+\dots+c\cos\phi)+\cos(c\cos\phi+c\cos\phi+\dots-c\cos\phi)+\dots\notag\\
        &\quad\quad\quad\quad\quad\quad+\cos(c\cos\phi-c\cos\phi+\dots-c\cos\phi)\big] \notag\\
        &=\frac{1}{2^{2M}}\sum_{i=0}^{2M}\left(\begin{matrix}
        2M\\i
    \end{matrix}\right)\cos((2M+1-2i)c\cos\phi) \label{eqn:cosexpand}
    \end{align}
\end{figure*}
{Based on the expansion of \eqref{eqn:cosexpand},}
\begingroup
\setlength{\belowdisplayskip}{-20pt}
\begin{figure*}

    \begin{align}
    \hline
    \cos^{2M+1}(c\cos\phi)\cos^{2M+1}(c\sin\phi)&=\left(\frac{1}{2^{2M}}\sum_{i=0}^{2M}\left(\begin{matrix}
        2M\\i
    \end{matrix}\right)\cos((2M+1-2i)c\cos\phi\right)\left(\frac{1}{2^{2M}}\sum_{j=0}^{2M}\left(\begin{matrix}
        2M\\j
    \end{matrix}\right)\cos((2M+1-2j)c\sin\phi\right) \label{eqn:rxsnrcosexpandstart}\\
    &=\frac{1}{2^{4M}}\sum_{i=0}^{2M}\sum_{j=0}^{2M}\left(\begin{matrix}
        2M\\i
    \end{matrix}\right)\left(\begin{matrix}
        2M\\j
    \end{matrix}\right)\cos((2M+1-2i)c\cos\phi) 
 \cos((2M+1-2j)c\sin\phi) \notag\\
 &=\frac{1}{2^{4M+1}}\sum_{i=0}^{2M}\sum_{j=0}^{2M}\left(\begin{matrix}
        2M\\i
    \end{matrix}\right)\left(\begin{matrix}
        2M\\j
    \end{matrix}\right)\left(\cos(d_{i,j}c\cos(\phi+\theta_{i,j})) + 
 \cos(d_{i,j}c\cos(\phi-\theta_{i,j}))\right)\label{eqn:rxsnrcosexpandend}
\end{align}
\end{figure*}
\endgroup
{we obtain the derivations in \eqref{eqn:rxsnrcosexpandstart}-\eqref{eqn:rxsnrcosexpandend}, where $d_{i,j}=\sqrt{(2M+1-2i)^2+(2M+1-2j)^2}$ and $\cos\theta_{i,j}=\frac{2M+1-2i}{\sqrt{(2M+1-2i)^2+(2M+1-2j)^2}}$.} We then integrate \eqref{eqn:rxsnrcosexpandend} over $[0, 2\pi)$. Note that this integration is the first kind of Bessel function of order zero, namely,
\begin{align}
    \frac{1}{2\pi}\int_0^{2\pi}\cos(d_{k,l}c\cos(\phi+\theta_{k,l}))d\phi=J_0(d_{k,l}c). \label{eqn:int1}
\end{align}
Similarly, we have 
\begin{align}
    \frac{1}{2\pi}\int_0^{2\pi}\cos(d_{k,l}c\cos(\phi-\theta_{k,l}))d\phi=J_0(d_{k,l}c). \label{eqn:int2}
\end{align}
We plug \eqref{eqn:int1} and \eqref{eqn:int2} back in \eqref{eqn:beforeexpandcos}. Because of the orthogonality of Bessel functions \cite{bowman1958introduction}, \eqref{eqn:beforeexpandcos} can be written as 
{\begin{align}
    \bar{\mathcal{C}}(\gamma)&\approx1-\frac{\sqrt{\gamma}}{2^{4M+1}}\sum_{k=0}^{2M}\sum_{l=0}^{2M}\left(\begin{matrix}
        2M\\k
    \end{matrix}\right)\left(\begin{matrix}
        2M\\l
    \end{matrix}\right) \notag\\
    &\quad\quad\times\int_0^\infty J_1(\sqrt{\gamma}\zeta)J_0(d_{k,l}\sqrt{P_\delta T/2}p_0(0){\zeta})d\zeta \notag\\
    &=1.
\end{align}}
\vspace{-1cm}

\section{Proof for Theorem \ref{thm:qpsknogaus}}
\label{app:asymgqpsknogaus}

In this appendix, we will verify the result we obtained in Section \ref{sec:qpsktxsnrccdf} by showing that the asymptotic behavior of average CCDF for FTN signaling using QPSK symbols with $\mathsf{SNR_{tx}}$ fixed does not resemble the average CCDF with Gaussian symbols.

\begin{figure}
    \centering
    \includegraphics[width=1\linewidth]{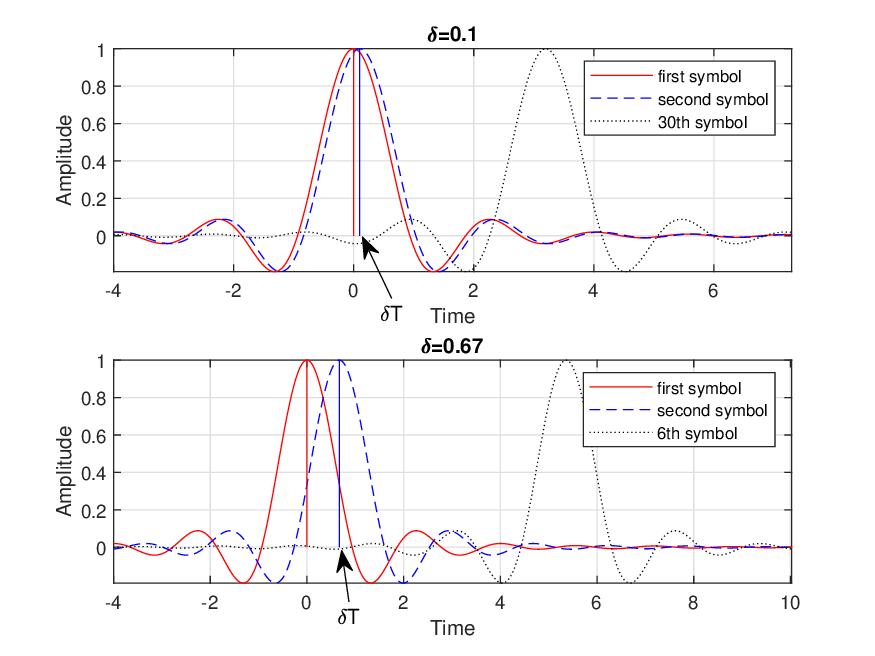}
    \caption{A demonstration of the transmitted FTN signal for two distinct $\delta$ values, showcasing only three symbols as an example.}
    \label{fig:demosmalltau}
\end{figure}

As $\delta$ decreases, transmit pulses modulated with symbols start to pack in a tighter manner. Fig. \ref{fig:demosmalltau} shows a simple demonstration of the signal $x_k(t)$. For smaller $\delta$, the ISI at sampling times becomes more severe. However,  the samples of the process $x(t_0)=\sum_{m=-M}^{M}a_k[m]p(t_0-m\delta T)$ do not converge to Gaussian random variables. 
Despite the presence of a sufficiently large number of symbols, certain symbols within $x(t_0)$ are still multiplied by coefficients significantly smaller than others. 
For instance, as shown in Fig. \ref{fig:demosmalltau}, when $\delta=0.1$ and sampling occurs at $t_0=0$, symbols located far from the sampling time contribute minimally to the sample. Therefore, to determine whether the FTN signal approaches a Gaussian process, it is necessary to examine the Lindeberg condition \cite{lindeberg1922neue}, which is more general than the central limit theorem (CLT). 
\begin{lemma}\cite[Lindeberg condition]{lindeberg1922neue}\label{thm:lindeberg}
    Let $X_l, l=1,\dots, S$ be $S$ mutually independent random variables, each with zero mean and variance $\sigma^2_l$. Let $Z_S=\sum_{l=1}^{S}X_l$ be the sum of $X_l$ with variance   $\sigma^2_{Z,S}=\sum_{l=1}^{S}\sigma_l^2$. If all  the $S$ random variables satisfy   
    \begin{align}
        \underset{S\rightarrow\infty}{\lim}\frac{1}{\sigma^2_{Z,S}}\sum_{l=1}^{S}\mathbb{E}\left[X_l^2\mathbbm{1}_{\left\{|X_l|>\epsilon \sigma_{Z,S}\right\}}\right]=0\label{eqn:lindebergdef}
    \end{align}
    for all $\epsilon>0$, then the variable $\frac{Z_N}{\sigma_{Z,S}}$ converges in distribution to a standard normal random variable as $N\rightarrow\infty$. 
\end{lemma}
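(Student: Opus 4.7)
The plan is to apply the Lindeberg–Feller theorem in its \emph{necessity} direction to the real part of $x_k(t_0)$, viewed as a sum of $S = 2M+1$ independent bounded random variables. Setting $X_m = a_{r,k}[m]\,p(t_0 - m\delta T)$, each $X_m$ takes only the two values $\pm\sqrt{P_k\delta T/2}\,p(t_0 - m\delta T)$ with equal probability, has variance $\sigma_m^2 = (P_k\delta T/2)\,p^2(t_0 - m\delta T)$, and the partial sum $Z_S=\sum_m X_m$ has total variance $\sigma_{Z,S}^2 = (P_k\delta T/2)\sum_m p^2(t_0 - m\delta T)$. The goal is to show that the Lindeberg condition \eqref{eqn:lindebergdef} fails for these variables, so that $Z_S/\sigma_{Z,S}$ does not converge in distribution to a standard Gaussian, and hence neither does $x_{r,k}(t_0)$.

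The first key step is to verify that $\sigma_{Z,S}^2$ is bounded (in fact, convergent) as $S\to\infty$. For any fixed $\delta>0$ and $t_0\in[0,\delta T)$, the RRC pulse has finite energy and the sequence $\{p(t_0-m\delta T)\}_m$ is square-summable, so $\sum_m p^2(t_0-m\delta T)$ converges to a finite positive constant; this is essentially the sampling identity established inside Appendix~\ref{prof:gaus}, now invoked for general $\delta$. The second key step exploits the discreteness of QPSK: because $|X_m|$ is deterministic given the index $m$, the event $\{|X_m|>\epsilon\sigma_{Z,S}\}$ is either the whole probability space or empty, so $\mathbb{E}[X_m^2\mathbbm{1}_{\{|X_m|>\epsilon\sigma_{Z,S}\}}]$ equals $\sigma_m^2$ whenever $|p(t_0-m\delta T)|>\epsilon\sqrt{\sum_{m'}p^2(t_0-m'\delta T)}$ and vanishes otherwise.

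Since the pulse coefficients close to $m\approx t_0/(\delta T)$ are bounded below by constants that do not depend on $S$, I can pick $\epsilon>0$ small enough that the above inequality is satisfied by a nonempty, \emph{finite} index set whose cumulative variance is a strictly positive fraction of $\sigma_{Z,S}^2$, uniformly in $S$. Consequently the Lindeberg sum in \eqref{eqn:lindebergdef} is bounded below by a positive constant for all sufficiently large $S$, and the condition fails.

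The main obstacle is conceptual rather than computational: Lemma~\ref{thm:lindeberg}, as stated, is only the sufficient direction, so failure of \eqref{eqn:lindebergdef} does not by itself rule out a Gaussian limit. The clean way around this is to observe that the same argument simultaneously violates Feller's condition $\max_m \sigma_m^2/\sigma_{Z,S}^2\to 0$, because the dominant central coefficients contribute a non-vanishing fraction of the total variance. Feller's condition is necessary for a sum of bounded non-Gaussian summands to converge to a Gaussian law, so its failure precludes $Z_S/\sigma_{Z,S}$ (and hence $x_{r,k}(t_0)$) from being asymptotically Gaussian. Repeating the argument verbatim for $x_{i,k}(t_0)$ and then combining the two coordinates establishes the claim for $x_k(t_0)$ at an arbitrary sampling instant $t_0$, for any fixed nonzero $\delta$, as $S\to\infty$.
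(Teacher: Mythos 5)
The statement you were asked to prove is the classical Lindeberg central limit theorem itself (the sufficiency direction), which the paper does not prove at all but simply cites from \cite{lindeberg1922neue}. A proof of that lemma would have to \emph{derive} the convergence of $Z_S/\sigma_{Z,S}$ to a standard normal from condition \eqref{eqn:lindebergdef} --- e.g., via Lindeberg's swapping argument or a characteristic-function expansion. Your proposal does none of this: it takes the lemma as given and argues that the Lindeberg (and Feller) conditions \emph{fail} for the QPSK FTN samples $x_k(t_0)$. That is a proof of a different statement --- essentially Theorem~\ref{thm:qpsknogaus} and the content of Appendix~\ref{app:asymgqpsknogaus} --- not of Lemma~\ref{thm:lindeberg}. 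As a proof of the lemma as stated, it is off target from the first sentence.

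Read instead as a blind reconstruction of Appendix~\ref{app:asymgqpsknogaus}, your argument tracks the paper's closely and is in one respect more careful. Both you and the paper (i) show that the total variance $s_N^2=P_k\delta T\sum_m p^2(t_0-m\delta T)$ converges to a finite positive limit, and (ii) observe that the $m=0$ summand alone contributes a non-vanishing fraction of that limit, so for small enough $\epsilon$ the Lindeberg sum stays bounded away from zero. The paper then concludes non-Gaussianity directly from the failure of the Lindeberg condition, which is a logical gap because the condition is only sufficient; you correctly flag this and attempt a repair via Feller's uniform-negligibility condition. Your repair is still slightly loose: the necessity half of the Lindeberg--Feller theorem \emph{assumes} uniform negligibility, so it cannot be invoked verbatim in a regime where that hypothesis fails. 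The clean closing step is Cram\'er's decomposition theorem --- a non-degenerate two-point summand that remains a fixed positive fraction of the total variance cannot appear in a sum whose limit law is Gaussian. With that substitution your argument for the downstream theorem is sound, and indeed tighter than the paper's own; but it does not prove Lemma~\ref{thm:lindeberg}.
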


The main idea for this theorem is that the CLT can still hold if there is no single or a group of $X_l$'s dominating the variance for the sum variable.  
The ISI we receive mainly comes from neighboring symbols and if we look into the random variable $x_k(t_0)$, we can find that the variance for this random variable $P\delta T\sum_{m=-M}^Mp^2(t_0-m\delta T)$ is dominated by 
the neighboring symbols. Therefore, it is necessary to investigate whether the Lindeberg condition is satisfied for $x_k(t_0)$.
As $\delta$ approaches zero (but not equal to zero), we have the following lemma.

\begin{lemma}
    The variance of random variable $x_k(t_0)$ always converges for arbitrary small but non-zero $\delta$  as $S$ goes to infinity.
\end{lemma}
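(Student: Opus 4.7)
The plan is to recognize that since the data symbols are i.i.d. with $\mathbb{E}[|a_k[m]|^2]=P_k\delta T$, the variance of $x_k(t_0)=\sum_{m=-M}^{M} a_k[m]\,p(t_0-m\delta T)$ equals $P_k\delta T\sum_{m=-M}^{M}|p(t_0-m\delta T)|^2$. Convergence of the variance as $S=2M+1\to\infty$ therefore reduces to convergence of the nonnegative series $\sum_{m\in\mathbb{Z}}|p(t_0-m\delta T)|^2$, independent of the symbol distribution.

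First I would establish absolute convergence directly from the decay of the shaping pulse: for the RRC pulse with roll-off $\beta$, standard asymptotics give $p(t)=O(1/t^2)$ as $|t|\to\infty$, so the terms decay like $m^{-4}$ and the series converges. This alone implies that the partial sums $\sigma_{Z,S}^{2}=P_k\delta T\sum_{|m|\le M}|p(t_0-m\delta T)|^2$ form a monotone bounded sequence, hence they converge to a finite limit for every fixed $\delta>0$.

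To identify the limit in closed form and expose its $\delta$-dependence, I would reuse the Fourier-domain identity already derived in \eqref{eqn:longpaprbeg}--\eqref{eqn:longpaprend}. By Poisson summation applied to $|p(\cdot)|^2$, whose Fourier transform is $G(f)$, one obtains
\begin{equation}
\sum_{m=-\infty}^{\infty}\bigl|p(t_0-m\delta T)\bigr|^2=\frac{1}{\delta T}\sum_{k\in\mathbb{Z}}G\!\left(\frac{k}{\delta T}\right)e^{j2\pi k t_0/(\delta T)}.
\end{equation}
Since $G(f)$ is supported on $\bigl[-\tfrac{1+\beta}{2T},\tfrac{1+\beta}{2T}\bigr]$ and is bounded, only finitely many (and for $\delta<\tfrac{1}{1+\beta}$ only the $k=0$) terms on the right are nonzero. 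Passing to the limit and multiplying by $P_k\delta T$ then yields
\begin{equation}
\lim_{S\to\infty}\sigma_{Z,S}^{2}=P_k\int_{-\frac{1}{2\delta T}}^{\frac{1}{2\delta T}}G(f)\,df,
\end{equation}
which is finite and strictly positive for every nonzero $\delta$, establishing the claim.

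The main obstacle is not analytic but bookkeeping: justifying the interchange of summation and Fourier inversion at the arbitrary sampling offset $t_0\in[0,\delta T)$ so that Poisson summation applies term by term. This is precisely what the absolute convergence established in the first step buys via Fubini's theorem; once granted, the remainder is a direct reuse of the identity already proved in Appendix \ref{prof:gaus}, so no further calculation is needed.
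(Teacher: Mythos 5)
Your proposal is correct, and it reaches the same limiting value as the paper ($P_k$ times the pulse energy, i.e.\ $P_k\int_{\mathbb{R}}|p(t)|^2\,dt = P_k\int G(f)\,df$ for $\delta<\frac{1}{1+\beta}$), but by a genuinely different and in fact cleaner route. The paper argues heuristically: it reads $P_k\,\delta T\sum_{m}p^2(t_0-m\delta T)$ as a Riemann sum and replaces it by $P_k\int_{-\infty}^{+\infty}p^2(t)\,dt$ ``when $\delta$ is small enough,'' which conflates the limit in the number of symbols with the limit $\delta\to 0$ and only yields an approximation. You instead separate the two issues: first you get convergence of the partial sums for \emph{every fixed} $\delta>0$ from the $O(1/t^2)$ tail decay of the RRC pulse (terms decaying like $m^{-4}$, so the monotone partial sums are bounded), and only then do you identify the limit via the frequency-domain identity the paper already proves in \eqref{eqn:longpaprbeg}--\eqref{eqn:longpaprend}. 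This buys an actual proof of convergence as $S\to\infty$ at fixed $\delta$, which is what the lemma literally asserts, whereas the paper's Riemann-sum argument only addresses the value of the limit for small $\delta$. One small slip to fix: in your displayed Poisson-summation step you identify the Fourier transform of $|p(\cdot)|^2$ with $G(f)=|P(f)|^2$; that is the transform of the autocorrelation $g(t)=p(t)\star p^*(-t)$, not of $|p(t)|^2$, whose transform is the frequency autocorrelation $\int P(\nu)P^*(\nu-f)\,d\nu$. As written, your $k=0$ term would read $G(0)/(\delta T)=|P(0)|^2/(\delta T)$ rather than the correct $\frac{1}{\delta T}\int G(f)\,df$. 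The final formula you state is nevertheless the right one and matches \eqref{eqn:longpaprend} and \eqref{eqn:gausthytxapp}, so the fix is purely notational; alternatively, simply cite the paper's derivation \eqref{eqn:longpaprbeg}--\eqref{eqn:longpaprend} directly, as you suggest, and drop the intermediate display.
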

\begin{proof}
    We observe that the variance $P\delta T\sum_{m=-M}^Mp^2(t_0-m\delta T)$ resembles numerical integration, so when $\delta$ is small enough, we can replace $\delta T$ with $\Delta$, which can be considered as dividing the time period $T$ into small intervals with length $\delta T$. Also, recall that $N=2M+1$. We then have the following 
    \begin{align}
        \underset{M\rightarrow\infty}{\lim}P_k\sum_{m=-M}^Mp^2(t_0-m\delta T)\Delta\approx P_k \int_{-\infty}^{+\infty}p^2(t)dt.
    \end{align}
    Since we assumed that $p(t)$ has unit energy, we conclude that the variance of $x_k(t_0)$ converges to $P_k$ as $\delta$ decreases.
\end{proof}

    We insert $x_k(t)$ into Lindeberg's condition and define  $s^2_N=P_k\delta T\sum_{m=-\frac{N-1}{2}}^{\frac{N-1}{2}}p^2(t_0-m\delta T)$, the left-hand side of \eqref{eqn:lindebergdef} becomes
    \begin{equation}
        \hspace{-0.2cm}\underset{N\rightarrow\infty}{\lim}\frac{1}{s^2_N}\sum_{m=-\frac{N-1}{2}}^{\frac{N-1}{2}}\mathbb{E}\left[|a_k[m]p(t_0-m\delta T)|^2\mathbbm{1}_{\left\{|a_k[m]p(t_0-m\delta T)|>\epsilon s_N\right\}}\right]. \label{eqn:lindeberg}
    \end{equation}
    We know that as $N\rightarrow\infty$, $s^2_N$ converges to $P_k$, therefore we choose $\epsilon<\sqrt{P_k\delta T}p(t_0)$ and $\epsilon$ is  larger than $a_k[m]p(t_0-m\delta T), m\neq 0$. As a result,  \eqref{eqn:lindeberg} becomes 
    \begin{align}
        \frac{\mathbb{E}\left[|a_k[0]p(t_0)|^2\right]}{P_k}=\frac{P_k\delta Tp^2(t_0)}{P_k},
    \end{align}
    which is not zero. Thus Lindeberg's condition is not satisfied. We then conclude that the CLT is not invoked, and the process $x_k(t)$ does not approach the Gaussian process in distribution.

\bibliographystyle{IEEEtran}

\bibliography{main}

\end{document}